\newcommand{\tr}[2][1]{Tr_{#1}^{#2}}
\newcommand{\GF}[1]{{\mathbb F}_{#1}}
\begin{document}
\title{Solving $X^{q+1}+X+a=0$ over Finite Fields}
\author{Kwang Ho Kim\inst{1,2}\and Junyop Choe\inst{1} \and Sihem Mesnager\inst{3}}

\institute{ Institute of Mathematics, State Academy of Sciences,
Pyongyang, Democratic People's Republic of Korea\\
\email{khk.cryptech@gmail.com} \and PGItech Corp., Pyongyang, Democratic People's Republic of Korea\\ \and Department of Mathematics, University of Paris VIII, 93526 Saint-Denis, France,  LAGA UMR 7539,  CNRS, Sorbonne Paris
Cit\'e, University of Paris XIII, 93430 Villetaneuse, France and Telecom ParisTech, 91120  Palaiseau, France\\
\email{smesnager@univ-paris8.fr}\\} \maketitle

\begin{abstract}

Solving the equation $P_a(X):=X^{q+1}+X+a=0$ over finite field
$\GF{Q}$, where $Q=p^n, q=p^k$ and $p$ is a prime, arises in many
different contexts including finite geometry, the inverse Galois
problem \cite{ACZ2000}, the construction of difference sets with
Singer parameters \cite{DD2004}, determining cross-correlation
between $m$-sequences \cite{DOBBERTIN2006,HELLESETH2008} and to
construct error-correcting codes \cite{Bracken2009}, as well as to
speed up the index calculus method for computing discrete logarithms
on finite fields \cite{GGGZ2013,GGGZ2013+} and on algebraic curves
\cite{M2014}.

 Subsequently, in
\cite{Bluher2004,HK2008,HK2010,BTT2014,Bluher2016,KM2019,CMPZ2019,MS2019},
the $\GF{Q}$-zeros  of $P_a(X)$ have been studied: in
\cite{Bluher2004} it was shown that the possible values of the
number of
 the zeros that $P_a(X)$ has in $\GF{Q}$ is $0$, $1$, $2$ or $p^{\gcd(n, k)}+1$.
 Some criteria for the number of the $\GF{Q}$-zeros of $P_a(x)$  were
 found in \cite{HK2008,HK2010,BTT2014,KM2019,MS2019}.
However, while the ultimate goal is to identify all the
$\GF{Q}$-zeros,
 even in the case $p=2$, it was solved only under the condition $\gcd(n, k)=1$ \cite{KM2019}.

We discuss this equation without any restriction on $p$ and
$\gcd(n,k)$.  New criteria for the number of the $\GF{Q}$-zeros of
$P_a(x)$ are proved. For the cases of one or two $\GF{Q}$-zeros, we
provide explicit expressions for these rational zeros in terms of
$a$. For the case of $p^{\gcd(n, k)}+1$ rational zeros, we provide a
parametrization of such $a$'s and  express the $p^{\gcd(n, k)}+1$
rational zeros by using that parametrization.

\noindent\textbf{Keywords:} Equation $\cdot$
M\"{u}ller-Cohen-Matthews (MCM) polynomial $\cdot$ Dickson
polynomial $\cdot$ Zeros of a polynomial $\cdot$ Irreducible
polynomial.
\end{abstract}

\section{Introduction}
Let $k$ and $n$ be any positive integers with $\gcd(n,k)=d$. Let
$Q=p^n$ and $q=p^k$ where $p$ is a prime. We consider the polynomial
\[P_a(X):=X^{q+1}+X+a, a\in \GF{Q}^{*}.\]  Notice the more general polynomial forms
$X^{q+1}+rX^{q}+sX+t$ with $s\neq r^q$ and $t\neq rs$ can be
transformed into this form by the substitution
 $X=(s-r^q)^{\frac{1}{q}}X_1-r$. It is clear that $P_a(X)$ have no multiple roots.

 These polynomials have arisen in several different contexts including finite geometry, the inverse Galois
problem \cite{ACZ2000}, the construction of difference sets with
Singer parameters \cite{DD2004}, determining cross-correlation
between $m$-sequences \cite{DOBBERTIN2006,HELLESETH2008} and to
construct error-correcting codes \cite{Bracken2009}. These
polynomials are also exploited to speed up (the relation generation
phase in) the index calculus method for computation of discrete
logarithms on finite fields \cite{GGGZ2013,GGGZ2013+} and on
algebraic curves \cite{M2014}.

Let $N_a$ denote the number of zeros in $\GF{Q}$ of polynomial
$P_a(X)$ and $M_i$ denote the number of $a\in \GF{Q}^{*}$ such that
$P_a(X)$ has exactly $i$ zeros in $\GF{Q}$.  In 2004, Bluher
\cite{Bluher2004}  proved that $N_a$ takes either of 0, 1, 2 and
$p^d+1$ where $d=\gcd(k, n)$ and computed $M_i$ for every $i$. She
also stated some criteria for the number of the $\GF{Q}$-zeros of
$P_a(X)$.

The ultimate goal in this direction of research is to identify all
the $\GF{Q}$-zeros of $P_a(X)$. Subsequently, there were much
efforts for this goal, specifically for a particular instance of the
problem over binary fields i.e. $p=2$. In 2008 and 2010, Helleseth
and Kholosha \cite{HK2008,HK2010} found new criteria for the number
of $\GF{2^n}$-zeros of $P_a(X)$. In the cases when there is a unique
zero or exactly two zeros and $d$ is odd, they provided explicit
expressions of these zeros as polynomials of $a$ \cite{HK2010}. In
2014, Bracken, Tan and Tan \cite{BTT2014} presented a criterion for
$N_a=0$  in $\GF{2^n}$ when $d=1$ and $n$ is even. Very recently,
Kim and Mesnager \cite{KM2019} completely solved this equation
$X^{2^k+1}+X+a=0$ over $\GF{2^n}$ when $d=1$. They showed that the
problem of finding zeros in $\GF{2^n}$ of $P_a(X)$ in fact can be
divided into two problems with odd $k$: to find the unique preimage
of an element in $\GF{2^n}$ under a MCM polynomial and to find
preimages of an element in $\GF{2^n}$ under a Dickson polynomial. By
completely solving these two independent problems, they explicitly
calculated all possible zeros in $\GF{2^n}$ of $P_a(X)$, with new
criteria for which $N_a$ is equal to $0$, $1$ or $p^d+1$ as a
by-product.

Very recently, new criteria for which $P_a(X)$ has $0$, $1$, $2$ or
$p^d+1$ roots were stated by \cite{MS2019} for any characteristic.

We discuss the equation $X^{p^k+1}+X+a=0, a\in \GF{p^n}$, without
any restriction on $p$ and $\gcd(n,k)$.  After defining a sequence
of polynomials and considering its properties in Section 2, it is
shown in Section 3 that if $N_a\leq2$ then there exists a quadratic
equation that the rational zeros must satisfy. In Section 4, we
state some useful properties of the polynomials which appear as the
coefficients of that quadratic equation. In Section 5, new criteria
for the number of the $\GF{Q}$-zeros of $P_a(x)$ are proved. For the
cases of one or two $\GF{Q}$-zeros, we provide explicit expressions
for these rational zeros in terms of $a$. We also provide a
parametrization of the $a$'s for which $P_a(X)$ has $p^{\gcd(n,
k)}+1$ rational zeros. Based that parametrization, all the
$p^{\gcd(n, k)}+1$ rational zeros are also expressed. For the case
of $p^{\gcd(n, k)}+1$ rational zeros,  some results to explicitly
express these rational zeros in terms of $a$ are further presented
in Section 6. Finally, we conclude in Section 7.

\section{Preliminaries}

 Given positive integers $k$ and $l$, define a polynomial
\[T^{kl}_{k}(X):=X+X^{p^k}+\cdots+X^{p^{k(l-2)}}+X^{p^{k(l-1)}}.\]
Usually we will abbreviate $T^{l}_{1}(\cdot)$ as $T_l(\cdot)$. For
$x\in \GF{p^l}$, $T_l(x)$ is the absolute trace $\tr{l}(x)$ of $x$.
The zeros of this polynomial are studied in \cite{KCLGM2019}.  In
particular, we need the following.
\begin{proposition}\label{zerosTk} For any positive integers $k$ and
$r$,
\[
\{x\in \overline{\GF{p}} \mid T^{kr}_{k}(x)=0\}=\{ u- u^{p^k}\mid
u\in \GF{p^{kr}}\}.
\]
\end{proposition}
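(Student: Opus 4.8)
The plan is to treat $T^{kr}_{k}(X)$ as a separable linearized polynomial and to close the argument by a cardinality comparison rather than by solving equations directly. First I would record the telescoping identity that yields the inclusion $\supseteq$. Writing $\sigma(x)=x^{p^k}$, we have $T^{kr}_{k}(x)=\sum_{i=0}^{r-1}\sigma^{i}(x)$, so for $x=u-u^{p^k}=u-\sigma(u)$ the sum collapses:
\[
T^{kr}_{k}\bigl(u-\sigma(u)\bigr)=\sum_{i=0}^{r-1}\bigl(\sigma^{i}(u)-\sigma^{i+1}(u)\bigr)=u-\sigma^{r}(u)=u-u^{p^{kr}}.
\]
Since every $u\in\GF{p^{kr}}$ satisfies $u^{p^{kr}}=u$, the right-hand side vanishes, so each element $u-u^{p^k}$ with $u\in\GF{p^{kr}}$ is a zero of $T^{kr}_{k}$.

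For the reverse inclusion I would count zeros. The polynomial $T^{kr}_{k}(X)=X+X^{p^k}+\cdots+X^{p^{k(r-1)}}$ is additive of degree $p^{k(r-1)}$, and its formal derivative equals the coefficient of $X$, namely $1$; hence $T^{kr}_{k}$ is separable and has exactly $p^{k(r-1)}$ distinct zeros in $\overline{\GF{p}}$. Thus the left-hand set has cardinality $p^{k(r-1)}$.

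Next I would compute the size of the right-hand set. The map $\phi\colon\GF{p^{kr}}\to\GF{p^{kr}}$, $\phi(u)=u-u^{p^k}$, is $\GF{p^k}$-linear, and its kernel is $\{u\in\GF{p^{kr}}\mid u^{p^k}=u\}=\GF{p^{\gcd(k,kr)}}=\GF{p^{k}}$, which has $p^{k}$ elements. By rank--nullity the image has cardinality $p^{kr}/p^{k}=p^{k(r-1)}$, so the right-hand set also has exactly $p^{k(r-1)}$ elements.

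Finally, since the image of $\phi$ is contained in the zero set of $T^{kr}_{k}$ by the first paragraph, and both sets are finite of the same cardinality $p^{k(r-1)}$, they must coincide; as a by-product this shows that all zeros of $T^{kr}_{k}$ already lie in $\GF{p^{kr}}$. I expect the only delicate points to be the separability and exact root count of the linearized polynomial, and the identification $\ker\phi=\GF{p^{k}}$ via $\gcd(k,kr)=k$; once these are in place, the cardinality comparison finishes the proof with no further computation.
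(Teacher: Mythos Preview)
Your proof is correct and follows essentially the same approach as the paper: both establish the inclusion $\{u-u^{p^k}\}\subseteq\ker T^{kr}_{k}$, compute $\#\{u-u^{p^k}\}=p^{k(r-1)}$ via $\ker\phi=\GF{p^k}$, and conclude by comparing with the degree of $T^{kr}_{k}$. Your version simply makes explicit the telescoping and separability arguments that the paper leaves as ``evidently'' and a bare degree bound.
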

\begin{proof}
Evidently, $\{u-u^{p^k}|u\in\GF{p^{kr}}\}\subset
\mathbf{ker}(T_k^{kr})$.  The linear mapping $u\mapsto u-u^{p^k}$
has the kernel $\GF{p^k}$ and so
$\#\{u-u^{p^k}|u\in\GF{p^{kr}}\}=p^{k(r-1)}$. On the other hand,
$T_k^{kr}$ can not have a kernel of  greater cardinality than its
degree $p^{k(r-1)}$.\qed
\end{proof}

Define the sequence of polynomials $\{A_r(X)\}$ as follows:
\begin{equation}\label{eq.defA}\begin{aligned}
&A_1(X)=1, A_2(X)=-1,\\
&A_{r+2}(X)=-A_{r+1}(X)^{q}-X^{q}A_{r}(X)^{q^2} \text{ for } r\geq
1.
\end{aligned}\end{equation}

The following lemma gives another identity which can be used as an
alternative definition of $\{A_r(X)\}$ and an interesting property
of this polynomial sequence which will be importantly applied
afterwards.
\begin{lemma}\label{Lem.newdef}
 For any $r\geq 1$, the following are true.
\begin{enumerate}
\item \begin{equation}\label{eq.defAA}
A_{r+2}(X)=-A_{r+1}(X)-X^{q^r}A_{r}(X).
\end{equation}
\item \begin{equation}\label{eq.Norm}
A_{r+1}(X)^{q+1}-A_r(X)^qA_{r+2}(X)=X^{\frac{q(q^r-1)}{q-1}}.
\end{equation}
\end{enumerate}
\end{lemma}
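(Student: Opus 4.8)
The plan is to establish both identities by induction on $r$, proving \eqref{eq.defAA} first because \eqref{eq.Norm} will then follow almost mechanically. Both rest on the characteristic-$p$ Frobenius rule $(u+v)^q = u^q+v^q$ and on $(-1)^q=-1$ (true for all $p$), which I will use throughout without comment.

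For \eqref{eq.defAA} the difficulty is that the two recurrences have very different shapes: the defining relation \eqref{eq.defA} raises the previous terms to the powers $q$ and $q^2$, whereas the target only shifts the exponent of $X$. My first move is to note that, since $A_{r+2}(X)$ is \emph{defined} by \eqref{eq.defA}, proving \eqref{eq.defAA} at index $r$ is equivalent, after equating the two expressions for $A_{r+2}(X)$ and rearranging, to the single Frobenius-difference identity
\[ A_{r+1}(X)^q - A_{r+1}(X) = X^{q^r}A_r(X) - X^q A_r(X)^{q^2}, \qquad (\star) \]
so it suffices to prove $(\star)$ for all $r\ge 1$. I would check $(\star)$ directly for $r=1,2$ as base cases and then induct: expand the left-hand side of $(\star)$ through the defining recurrence $A_{r+1}=-A_r^q-X^qA_{r-1}^{q^2}$, invoke the inductive hypothesis $(\star)$ at the previous index to replace $A_r^q-A_r$, and collapse what remains using the already-established instances of \eqref{eq.defAA} and \eqref{eq.defA} at index $r-2$, which supply $A_{r-1}+A_r=-X^{q^{r-2}}A_{r-2}$ and $A_{r-1}^q+A_r=-X^{q}A_{r-2}^{q^2}$ respectively. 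After these substitutions the surviving monomials cancel in pairs. I expect this to be the main obstacle: the step consumes two earlier cases, so it is a strong (two-step) induction with base cases $r=1,2$, and one must keep all the $q$-power exponents aligned through the cancellation.

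With \eqref{eq.defAA} available, I would prove \eqref{eq.Norm} by establishing the one-step recursion $N_{r+1}=X^{q^{r+1}}N_r$ for $N_r:=A_{r+1}^{q+1}-A_r^qA_{r+2}$. Starting from $N_{r+1}=A_{r+2}^{q+1}-A_{r+1}^qA_{r+3}$, I replace $A_{r+3}$ by $-A_{r+2}-X^{q^{r+1}}A_{r+1}$ using \eqref{eq.defAA}, factor $A_{r+2}$ out of the first two terms, and rewrite $A_{r+2}^q+A_{r+1}^q=(A_{r+1}+A_{r+2})^q$ by Frobenius; then \eqref{eq.defAA} at index $r$ turns this into $(-X^{q^r}A_r)^q=-X^{q^{r+1}}A_r^q$. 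Substituting leaves $N_{r+1}=X^{q^{r+1}}(A_{r+1}^{q+1}-A_r^qA_{r+2})=X^{q^{r+1}}N_r$. A direct computation gives $N_1=A_2^{q+1}-A_1^qA_3=X^q$, so telescoping yields $N_r=X^{q+q^2+\cdots+q^r}=X^{q(q^r-1)/(q-1)}$, which is \eqref{eq.Norm}. This part is routine once \eqref{eq.defAA} is known; essentially all the real work lies in proving $(\star)$.
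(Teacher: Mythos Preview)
Your proposal is correct and follows essentially the same route as the paper: both parts are proved by a two-step induction, with part~\eqref{eq.defAA} established first (using both the defining recurrence~\eqref{eq.defA} and earlier instances of~\eqref{eq.defAA}) and part~\eqref{eq.Norm} then reduced to the telescoping relation $N_{r+1}=X^{q^{r+1}}N_r$ via~\eqref{eq.defAA}. The only cosmetic difference is that the paper computes $A_{r+3}$ directly rather than passing through your auxiliary identity~$(\star)$; note also that after expanding $A_{r+1}^q-A_{r+1}$ via~\eqref{eq.defA} you will actually encounter $A_r^q-A_r^{q^2}$, i.e.\ the $q$-th power of the previous~$(\star)$, not $A_r^q-A_r$ as written---but the cancellation you describe goes through exactly as you say.
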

\begin{proof}
We will prove these identities by induction on $r$. It is easy to
check that they hold for $r=1,2$. Suppose that they hold for all
indices less than $r (\geq 3)$. Then, we have
\[\begin{aligned}
A_{r+3}(X)&=-A_{r+2}(X)^{q}-X^{q}A_{r+1}(X)^{q^2} \\
&=\left(A_{r+1}(X)+X^{q^{r}}A_{r}(X)\right)^{q} +
X^{q}\left(A_{r}(X)+X^{q^{r-1}}A_{r-1}(X)\right)^{q^2}\\
&=\left(A_{r+1}^{q}(X)+X^qA_{r}^{q^2}(X)\right) +
X^{q^{r+1}}\left(A_{r}^{q}(X)+X^qA_{r-1}^{q^2}(X)\right)\\
&=-A_{r+2}(X)-X^{q^{r+1}}A_{r+1}(X),
\end{aligned}\]
which proves \eqref{eq.defAA} for all $r$. Also, using the proved
equality \eqref{eq.defAA}, we have
\[\begin{aligned}
&A_{r+2}(X)^{q+1}-A_{r+1}(X)^qA_{r+3}(X)\\
&=A_{r+2}(X)^{q+1}+A_{r+1}(X)^q\left(A_{r+2}(X)+X^{q^{r+1}}A_{r+1}(X)\right)\\
&=X^{q^{r+1}}\left(A_{r+1}(X)^{q+1}-A_r(X)^qA_{r+2}(X)\right)+A_{r+2}(X)\left(A_{r+2}(X)^q+A_{r+1}(X)^q+X^{q^{r+1}}A_{r}(X)^q\right)\\
&\overset{\eqref{Lem.newdef}}{=}X^{q^{r+1}}\left(A_{r+1}(X)^{q+1}-A_r(X)^qA_{r+2}(X)\right)\\
&=X^{q^{r+1}}X^{\frac{q(q^r-1)}{q-1}}=X^{\frac{q(q^{r+1}-1)}{q-1}},
\end{aligned}\]
which proves \eqref{eq.Norm} for all $r$.\qed
\end{proof}

The zero set of $A_r(X)$ can be completely determined for all $r$:
\begin{proposition}\label{Prop.A_r(x)=0}
For any $r\geq 3$,
\[
\{x\in \overline{\GF{p}} \mid
A_r(x)=0\}=\left\{\frac{(u-u^q)^{q^2+1}}{(u-u^{q^2})^{q+1}}, \ \
u\in \GF{q^r}\setminus\GF{q^2}\right\}.
\]
\end{proposition}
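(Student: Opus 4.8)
The plan is to prove the two inclusions separately, deriving the reverse inclusion from a counting argument once the direct inclusion and a few structural facts about $A_r$ are in place.

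First I would record three elementary facts. (a) Every $A_r$ is a polynomial in $X^q$: writing $A_r(X)=\tilde A_r(X^q)$, this follows by induction from the defining recurrence \eqref{eq.defA}, since both $A_{r+1}^q$ and $X^qA_r^{q^2}$ are polynomials in $X^q$. As $X\mapsto X^q$ is a bijection of $\overline{\GF{p}}$, the number of distinct zeros of $A_r$ equals that of $\tilde A_r$, hence is at most $\deg\tilde A_r=\deg A_r/q$. (b) From the additive recurrence \eqref{eq.defAA} one gets $\deg A_{r+2}=q^r+\deg A_r$, whence a closed form for $\deg A_r$ and the value $m_r:=\deg A_r/q$. (c) The map $\phi$ is Frobenius-compatible, $\phi(u)^{q^r}=\phi(u^{q^r})$, so it sends $\GF{q^r}$ into itself, and on the domain $\GF{q^r}\setminus\GF{q^2}$ its denominator $(u-u^{q^2})^{q+1}$ never vanishes.

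For the inclusion $\supseteq$ I would show $A_r(\phi(u))=0$ for every $u\in\GF{q^r}\setminus\GF{q^2}$ by establishing, by induction on $r$, an explicit formula for $A_r(\phi(u))$ as a rational function of $u$ whose numerator is divisible by $u-u^{q^r}$ and whose denominator is a power of $u-u^{q^2}$. Setting $a_r:=A_r(\phi(u))$ and using \eqref{eq.defAA} in the form $a_{r+2}=-a_{r+1}-\phi(u)^{q^r}a_r$ together with $\phi(u)^{q^r}=\phi(u^{q^r})$ from (c), the inductive step amounts to combining the factors $u-u^{q^r}$ and $u-u^{q^{r+1}}$ carried by $a_r$ and $a_{r+1}$ into the factor $u-u^{q^{r+2}}$ of $a_{r+2}$; this is exactly the telescoping $\sum_{i=0}^{s-1}(u-u^q)^{q^i}=u-u^{q^s}$ underlying Proposition \ref{zerosTk} (that is, $T^{sk}_k(u-u^q)=u-u^{q^s}$). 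Since the denominator is nonzero off $\GF{q^2}$ and the numerator vanishes whenever $u^{q^r}=u$, we obtain $A_r(\phi(u))=0$ on the whole domain. This bookkeeping --- pinning down the exact exponents so that the recurrence closes --- is the main obstacle; I expect the norm identity \eqref{eq.Norm} to be the convenient device for controlling the denominators along the induction.

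The inclusion $\subseteq$ then follows by counting. Write $\Phi\colon\GF{q^r}\setminus\GF{q^2}\to\overline{\GF{p}}$, $u\mapsto\phi(u)$, and factor it as $\Phi=\gamma\circ\beta\circ\alpha$ with $\alpha(u)=u-u^q$, $\beta(N)=N^{q-1}$, and $\gamma(E)=E^q/(1+E)^{q+1}$ (the last using $(1+E^q)=(1+E)^q$). On the domain each stage has bounded fibres: $\alpha$ is at most $q$-to-one (cosets of $\GF{q}$), $\beta$ at most $(q-1)$-to-one (multiples by $\GF{q}^\star$), and $\gamma$ at most $(q+1)$-to-one, since $\gamma(E)=\phi$ is equivalent, after $G=(1+E)^{-1}$, to $-G$ being a zero of $Y^{q+1}+Y+\phi$, which has at most $q+1$ zeros. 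Hence $\Phi$ is at most $q(q^2-1)$-to-one, so its image has size at least $|\GF{q^r}\setminus\GF{q^2}|/(q(q^2-1))$, which a short computation identifies with $m_r$ (treating $r$ odd and even separately via $\GF{q^r}\cap\GF{q^2}=\GF{q^{\gcd(2,r)}}$). Combining the two halves: by $\supseteq$ the image of $\Phi$ lies in the zero set of $A_r$, which by (a) has at most $m_r$ elements, while the image has at least $m_r$ elements; hence all three coincide, proving the statement. Notably this argument needs only upper bounds on the fibres and the crude bound (distinct zeros)$\le\deg A_r/q$, so neither separability of $\tilde A_r$ nor the exact fibre sizes are required.
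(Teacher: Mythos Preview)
Your proposal is correct. For the inclusion $\supseteq$ you are on the same path as the paper: an explicit closed form for $A_r$ at a parametrized point, established by induction via \eqref{eq.defAA}. The paper carries this out in the variable $v=u-u^q$ (so that $\phi(u)=v^{q^2+1}/(v+v^q)^{q+1}$) and obtains
\[
A_r\!\left(\frac{v^{q^2+1}}{(v+v^q)^{q+1}}\right)=(-1)^{r+1}\,\frac{\sum_{j=1}^{r}v^{q^j}}{v^q+v^{q^2}}\prod_{j=2}^{r-1}\Bigl(\frac{v}{v+v^q}\Bigr)^{q^j},
\]
whose numerator is $(T_k^{kr}(v))^q=(u-u^{q^r})^q$, exactly your predicted factor. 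The induction closes directly from \eqref{eq.defAA}; the norm identity \eqref{eq.Norm} is not needed.

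Where you genuinely diverge is the reverse inclusion. The paper reuses the displayed formula as an ``if and only if'': it first asserts (without proof) that every nonzero $x\in\overline{\GF{p}}$ can be written as $v^{q^2+1}/(v+v^q)^{q+1}$ with $v+v^q\neq0$, and then reads off that $A_r(x)=0$ forces $T_k^{kr}(v)=0$, i.e.\ $v=u-u^q$ with $u\in\GF{q^r}$ by Proposition~\ref{zerosTk}. Your route instead bounds the zero set by $\deg A_r/q$ and matches it against a lower bound on $\#\Phi(\GF{q^r}\setminus\GF{q^2})$ via the fibre analysis of $\gamma\circ\beta\circ\alpha$. This trades the surjectivity of the $v$-parametrization over $\overline{\GF{p}}$ for a degree computation and fibre count; both are short. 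It is worth noting that your decomposition and fibre bounds are precisely the maps $\varphi_1,\varphi_2,\varphi_3$ that the paper later introduces in the proof of Lemma~\ref{Lem.p^d+1cond}, so your counting argument effectively anticipates that passage.
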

\textbf{Proof.} Given any $x\in \overline{\GF{p}}\setminus \{0\}$,
there exists at least one element $v\in \overline{\GF{p}}$ such that
$x = \frac{v^{q^2+1}}{(v+v^q)^{q+1}}$ and $v+v^q\neq0$. Then, for
any $r\geq 2$, we have
\[
A_r(x)=(-1)^{r+1}\frac{\sum_{j=1}^{r}v^{q^j}}{v^q+v^{q^2}}\prod_{j=2}^{r-1}
\left(\frac{v}{v+v^q}\right)^{q^j},
\]
where for $i = 2$ it is assumed that the product over the empty set
is equal to 1. Indeed, this can be proved by induction on $r$ as
follows. For $r=2$ and $r=3$, we have
\[\begin{aligned}
A_2(x)=-1=(-1)^3\frac{\sum_{j=1}^{2}v^{q^j}}{v^q+v^{q^2}}
\end{aligned}\]
and \[\begin{aligned}
A_3(x)=1-x^{q}=1-\frac{v^{q+q^3}}{(v+v^q)^{q+q^2}}=(-1)^4\frac{\sum_{j=1}^{3}v^{q^j}}{v^q+v^{q^2}}\left(\frac{v}{v+v^q}\right)^{q^2}.
\end{aligned}\]
Assuming this identity holds for all indices less than $r$, we have
\[\begin{aligned}
A_r(x)&\overset{\eqref{eq.defAA}}{=}-A_{r-1}(x)-x^{q^{r-2}}A_{r-2}(x)\\
&=(-1)^{r+1}\frac{\sum_{j=1}^{r-1}v^{q^j}}{v^q+v^{q^2}}\prod_{j=2}^{r-2}
\left(\frac{v}{v+v^q}\right)^{q^j}-(-1)^{r+1}\frac{v^{q^r}\sum_{j=1}^{r-2}v^{q^j}}{(v+v^q)^{q^{r-1}+q}}\prod_{j=2}^{r-2}
\left(\frac{v}{v+v^q}\right)^{q^j}\\
&=(-1)^{r+1}\frac{(v+v^{q})^{q^{r-1}}\sum_{j=1}^{r-1}v^{q^j}-v^{q^r}\sum_{j=1}^{r-2}v^{q^j}}{v^{q^{r-1}}(v+v^{q})^q}\prod_{j=2}^{r-1}
\left(\frac{v}{v+v^q}\right)^{q^j}\\
&=(-1)^{r+1}\frac{\sum_{j=1}^{r}v^{q^j}}{v^q+v^{q^2}}\prod_{j=2}^{r-1}
\left(\frac{v}{v+v^q}\right)^{q^j}.
\end{aligned}\]

 Thus
$A_r(x)=0$ if and only if
$\sum_{j=1}^{r}v^{q^j}=(T_{k}^{kr}(v))^q=0$ and $v+v^q\neq0$, which
by Proposition \ref{zerosTk} is equivalent to $v=u-u^q$ for some
$u\in \GF{q^r}\setminus\GF{q^2}$.

Therefore, $A_r(x)=0$ if and only if
$x=\frac{(u-u^q)^{q^2+1}}{(u-u^{q^2})^{q+1}}$ for some $u\in
\GF{q^r}\setminus\GF{q^2}$. \qed

\section{Quadratic equation satisfied by rational zeros of $P_a(X)$}

Letting $m=n/d$, define polynomials
\[\begin{aligned}
&F(X):=A_m(X),\\
&G(X):=-A_{m+1}(X)-XA_{m-1}^q(X).
\end{aligned}\]

 We will show that if $F(a)\neq0$ then the $\GF{Q}$-zeros of $P_a(X)$
satisfy a quadratic equation and therefore necessarily $N_a\leq 2$.
\begin{lemma}\label{lem.quadEq}
Let $a\in \GF{Q}^{*}$. If $P_a(x)=0$ for $x\in \GF{Q}$ then
\begin{equation}\label{eq.quadEq}
F(a)x^2+G(a)x+aF^q(a)=0.
\end{equation}
\end{lemma}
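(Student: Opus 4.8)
The plan is to use the defining feature of $m$: since $d=\gcd(n,k)$ divides $k$, the integer $km=kn/d$ is a multiple of $n$, so $x^{q^m}=x^{p^{kn/d}}=x$ for every $x\in\GF{Q}$. Thus $m$ is the order of the Frobenius $x\mapsto x^q$ on $\GF{Q}$, and any identity that writes $x^{q^r}$ as a fractional-linear expression in $x$ with coefficients assembled from $a$ will, upon setting $r=m$, collapse into a relation involving $x$ and $a$ alone. The aim is to choose those coefficients so that the resulting relation is exactly \eqref{eq.quadEq}. The motivation is transparent: dividing $P_a(x)=0$ by $x$ gives $x^q=-(x+a)/x$, the M\"obius action of $\left(\begin{smallmatrix}-1 & -a\\ 1 & 0\end{smallmatrix}\right)$, and raising to the $q^r$-th power yields $x^{q^{r+1}}=-(x^{q^r}+a^{q^r})/x^{q^r}$; composing the $m$ transfer matrices and reading off the fixed-point equation of their product produces a quadratic in $x$, whose entries (we shall see) are controlled by the sequence $\{A_r\}$.

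To avoid any bookkeeping with denominators, I would instead prove directly, by induction on $r\ge1$, the bilinear identity
\begin{equation}\label{eq.biln}
\bigl(A_r(a)\,x-a\,A_{r-1}(a)^q\bigr)\,x^{q^r}=A_{r+1}(a)\,x-a\,A_r(a)^q,
\end{equation}
with the convention $A_0(X)=0$ (consistent with \eqref{eq.defA} and already implicit in the definition of $G$ when $m=1$). The base case $r=1$ is immediate: the left side is $x\cdot x^{q}=x^{q+1}=-x-a$ by $P_a(x)=0$, while the right side is $A_2(a)x-aA_1(a)^q=-x-a$. Granting \eqref{eq.biln}, the lemma follows by setting $r=m$ and using $x^{q^m}=x$: the identity becomes $A_m(a)x^2-aA_{m-1}(a)^qx=A_{m+1}(a)x-aA_m(a)^q$, which rearranges at once into $F(a)x^2+G(a)x+aF^q(a)=0$.

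For the inductive step I would exploit the observation that the bracketed coefficient on the left of \eqref{eq.biln} at level $r+1$, namely $A_{r+1}(a)x-aA_r(a)^q$, is precisely the right-hand side at level $r$. Hence, multiplying \eqref{eq.biln} through by $x^{q^{r+1}}$ and substituting the Frobenius image $x^{q^r}x^{q^{r+1}}=(x^{q+1})^{q^r}=(-x-a)^{q^r}=-x^{q^r}-a^{q^r}$ of $P_a(x)=0$ (no division needed), the level-$(r+1)$ left side becomes $-(A_{r+1}(a)x-aA_r(a)^q)-a^{q^r}(A_r(a)x-aA_{r-1}(a)^q)$. Matching this against $A_{r+2}(a)x-aA_{r+1}(a)^q$ term by term then follows from two uses of \eqref{eq.defAA}: the plain recursion $A_{r+2}(a)=-A_{r+1}(a)-a^{q^r}A_r(a)$ settles the coefficient of $x$, and its $q$-th power $A_{r+1}(a)^q=-A_r(a)^q-a^{q^r}A_{r-1}(a)^q$ settles the constant term.

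The real difficulty lies not in this verification but in guessing the correct invariant \eqref{eq.biln} — in particular, recognizing that the ``denominator'' coefficients must be the Frobenius-twisted quantities $-aA_{r-1}(a)^q$ and $-aA_r(a)^q$, and not further untwisted $A$-values. Once the shape is right the induction closes precisely because \eqref{eq.defAA} is available both as stated and in its $q$-th power form, which mirrors the two rows of the transfer matrix. I anticipate no trouble from vanishing denominators, since \eqref{eq.biln} never divides; the hypothesis $a\ne0$ enters only through $x\ne0$ and is in fact not needed for \eqref{eq.biln} itself.
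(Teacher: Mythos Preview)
Your proposal is correct and follows essentially the same route as the paper: both establish the identity $x^{q^r}\bigl(A_r(a)x-aA_{r-1}(a)^q\bigr)=A_{r+1}(a)x-aA_r(a)^q$ by induction on $r$ and then set $r=m$ using $x^{q^m}=x$. The only difference is in the inductive step---the paper raises the level-$r$ identity to the $q$-th power and substitutes $x^q=-(x+a)/x$, closing with the original recursion \eqref{eq.defA}, whereas you multiply by $x^{q^{r+1}}$ and use $x^{q^r+q^{r+1}}=(-x-a)^{q^r}$ together with \eqref{eq.defAA}; your variant is marginally cleaner since it never divides by $x$.
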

\begin{proof} If $x^{q+1}+x+a=0$ for  $x\in \GF{Q}$, then $x\neq 0$ and thus we get
\begin{equation}\label{eq.xq=x+a/x}
x^q = \frac{-x-a}{x}.
\end{equation}
Now, we prove that for any $r\geq 1$
\begin{equation}\label{eq.xqr}
x^{q^r}\left(A_r(a)x-aA_{r-1}(a)^q\right)-A_{r+1}(a)x+aA_{r}(a)^{q}=0
\end{equation}
with the assumption $A_0(x)=0$. In fact, if $r=1$ then the left side
of \eqref{eq.xqr} is $P_a(x)$ and so it holds for $r=1$. Suppose
that it holds for $r\geq 1$. Raising $q-$th power to both sides of
\eqref{eq.xqr} and substituting \eqref{eq.xq=x+a/x}, we have

\begin{center}
$x^{q^{r+1}}\left(A_r(a)^{q}x^{q}-a^{q}A_{r-1}(a)^{q^2}\right)-A_{r+1}(a)^{q}x^{q}+a^{q}A_{r}(a)^{q^2}=0 \Rightarrow$\\
$x^{q^{r+1}}\left(-A_r(a)^{q}\frac{x+a}{x}-a^{q}A_{r-1}(a)^{q^2}\right)+A_{r+1}(a)^{q}\frac{x+a}{x}+a^{q}A_{r}(a)^{q^2}=0\Rightarrow$\\
$x^{q^{r+1}}\left(\big(-A_r(a)^{q}-a^{q}A_{r-1}(a)^{q^2}\big)x-aA_r(a)^q\right)+\left(A_{r+1}(a)^{q}+a^{q}A_{r}(a)^{q^2}\right)x +aA_{r+1}(a)^q=0\Rightarrow$\\
$x^{q^{r+1}}\left(A_{r+1}(a)x-aA_r(a)^q\right)-A_{r+2}(a)x+aA_{r+1}(a)^q=0.$
\end{center}
This shows that \eqref{eq.xqr} holds for $r+1$ and so for all $r$.

Taking $r=m$ in \eqref{eq.xqr} and using the fact that
$x^{q^m}=x^{Q^{k/d}}=x$ when $x\in \GF{Q}$, we obtain the result of
the lemma.\qed
\end{proof}

\section{Some equalities involving $F$ and $G$}
To determine the $\GF{Q}-$rational zeros of $P_a(X)$ when $N_a\leq
2$, we will need the following properties of the polynomials $F$ and
$G$ which appear as coefficients of the quadratic equation
\eqref{eq.quadEq}.
\begin{proposition}\label{Prop.properties_FG} For any $x\in \GF{q^m}$,
the following are true.
\begin{enumerate}
\item
\begin{equation}\label{property_FG_1}
\left(G(x)-2F(x)\right)^q=-G(x).
\end{equation}
\item
\begin{equation}\label{property_FG_2}
{G(x)}^2-4x{F(x)}^{q+1}\in \GF{q}.
\end{equation}
\item
\begin{equation}\label{property_FG_3}
G(x)=-x^qF^{q^2}(x)+F^q(x)+xF(x).
\end{equation}
\end{enumerate}
\end{proposition}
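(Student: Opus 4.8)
The plan is to derive all three identities from the two recursions \eqref{eq.defA} and \eqref{eq.defAA}, together with two elementary facts about the evaluation point. First, an immediate induction on \eqref{eq.defA} shows every $A_r$ has coefficients in $\GF{p}$ (they are built from $\pm1$ by raising to $q$-th powers), so Frobenius commutes with evaluation: $A_r(x)^{q^s}=A_r(x^{q^s})$ for all $s$. Second, for $x\in\GF{q^m}$ we have $x^{q^m}=x$, so the exponents $x^{q^r}$ occurring in \eqref{eq.defAA} are periodic in $r$ with period $m$; the whole argument will hinge on invoking $x^{q^m}=x$ at exactly the index $r=m$ to ``close the loop''. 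I will also use throughout the identity obtained by equating the two expressions \eqref{eq.defA} and \eqref{eq.defAA} for $A_{r+2}$, namely \[A_{r+1}(X)^q+X^qA_r(X)^{q^2}=A_{r+1}(X)+X^{q^r}A_r(X),\qquad r\ge1.\] (Throughout I assume $m\ge2$, so that the indices $m-1,m,m+1$ used below fall in the range where the recursions hold; the degenerate case $m=1$, where $F\equiv1$ and $A_{m-1}=A_0=0$, is checked directly.)

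To prove \eqref{property_FG_1}, I observe that since $2\in\GF{p}$ it is equivalent to $G(x)^q+G(x)=2F(x)^q$. I would first compute $G(x)^q=-A_{m+1}(x)^q-x^qA_{m-1}(x)^{q^2}$ and use \eqref{eq.defA} at $r=m-1$, which rearranges to $x^qA_{m-1}(x)^{q^2}=-A_{m+1}(x)-A_m(x)^q$, to get $G(x)^q=A_{m+1}(x)-A_{m+1}(x)^q+A_m(x)^q$. Adding the definition $G(x)=-A_{m+1}(x)-xA_{m-1}(x)^q$ then reduces the whole claim to the single identity $A_{m+1}(x)^q+A_m(x)^q+xA_{m-1}(x)^q=0$. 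This is precisely \eqref{eq.defAA} at $r=m-1$ raised to the $q$-th power, once $x^{q^m}$ is replaced by $x$; this is where the hypothesis $x\in\GF{q^m}$ is used.

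The identity \eqref{property_FG_3} follows the same pattern but carries the heaviest bookkeeping. Using \eqref{eq.defA} at $r=m-1$ to rewrite $-A_{m+1}(x)$, the definition of $G$ becomes $G(x)=A_m(x)^q+x^qA_{m-1}(x)^{q^2}-xA_{m-1}(x)^q$, so after cancelling $A_m(x)^q$ the target reduces to \[x^q\bigl(A_{m-1}(x)^{q^2}+A_m(x)^{q^2}\bigr)=x\bigl(A_{m-1}(x)^q+A_m(x)\bigr).\] I would prove this by expressing $x^qA_{m-1}(x)^{q^2}$ through the equated recursions at $r=m-1$, and $x^qA_m(x)^{q^2}$ through the equated recursions at $r=m$ (here again $x^{q^m}=x$ enters); summing the two and simplifying $A_{m+1}(x)$ and $A_{m+1}(x)^q$ via \eqref{eq.defAA} makes the terms $x^{q^{m-1}}A_{m-1}(x)$ cancel, leaving exactly $x\bigl(A_{m-1}(x)^q+A_m(x)\bigr)$.

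Finally, \eqref{property_FG_2} should not be attacked head-on; it drops out of the other two. It suffices to show $\bigl(G(x)^2-4xF(x)^{q+1}\bigr)^q=G(x)^2-4xF(x)^{q+1}$. When $F(x)\ne0$, I substitute $G(x)^q=2F(x)^q-G(x)$ (which is \eqref{property_FG_1}) into $(G(x)^q)^2$ and expand $(4xF(x)^{q+1})^q=4x^qF(x)^{q^2}F(x)^q$; after dividing by $4F(x)^q$ the required equality collapses to precisely \eqref{property_FG_3}. When $F(x)=0$, \eqref{property_FG_1} gives $G(x)^q=-G(x)$ directly, so $(G(x)^2)^q=G(x)^2$ and the quantity again lies in $\GF{q}$. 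The main obstacle here is not any single computation but two recognitions: that the two recursive descriptions of $A_r$ must be played against each other, and that the period-$m$ relation $x^{q^m}=x$ must be applied exactly where $A_{m+1}$ wraps onto the lower-index terms. Once these are in place, \eqref{property_FG_2} is a purely formal consequence of \eqref{property_FG_1} and \eqref{property_FG_3}, and I expect \eqref{property_FG_3} (the reduction to the cancelling identity) to be the most delicate step.
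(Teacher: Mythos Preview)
Your proposal is correct. For items \eqref{property_FG_1} and \eqref{property_FG_3} you do essentially what the paper does: direct manipulation of the two recursions \eqref{eq.defA} and \eqref{eq.defAA} together with the crucial substitution $x^{q^m}=x$. The organization differs slightly (the paper goes through $A_{m-2}$ in the proof of \eqref{property_FG_3} rather than your ``equated recursions'' identity at $r=m-1$ and $r=m$), but the ideas and the length of the computation are comparable.

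The genuine difference is in \eqref{property_FG_2}. The paper attacks $E^q-E$ head-on: it expands $(A_{m+1}^q+x^qA_{m-1}^{q^2})^2 - 4x^qA_m^{q(q+1)} - (A_{m+1}+xA_{m-1}^q)^2 + 4xA_m^{q+1}$, substitutes $A_{m+1}^q=-A_m^q-xA_{m-1}^q$, and after some cancellation reaches $4xA_m^q\bigl(A_{m-1}^q+x^qA_{m-2}^{q^2}+A_m\bigr)$, which vanishes by \eqref{eq.defA}. Your route is shorter and more conceptual: once \eqref{property_FG_1} gives $G^q=2F^q-G$, the difference $E^q-E$ factors as $4F^q\bigl(F^q-G-x^qF^{q^2}+xF\bigr)$, and the bracket is zero by \eqref{property_FG_3}. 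This makes the dependency structure transparent (item 2 is a formal consequence of items 1 and 3) and avoids introducing $A_{m-2}$. Your case split on $F(x)=0$ is in fact unnecessary, since the factor $4F(x)^q$ already kills the difference in that case, but it does no harm.
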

\begin{proof} The first item follows from
\[\begin{aligned}
\left(G(x)-2F(x)\right)^q&={G(x)}^q-2{F(x)}^q=-{A_{m+1}(x)}^q-x^q{A_{m-1}(x)}^{q^2}-2A_m(x)^q\\
&\overset{\eqref{eq.defAA}}{=}(A_{m}(x)+x^{q^{m-1}}A_{m-1}(x))^q-x^q{A_{m-1}(x)}^{q^2}-2A_m(x)^q\\
&=xA_{m-1}(x)^q-x^q{A_{m-1}(x)}^{q^2}-A_m(x)^q \text{  (since $x^{q^m}=x$)}\\
&\overset{\eqref{eq.defA}}{=}xA_{m-1}(x)^q+A_{m+1}(x)=-G(x).
\end{aligned}\]
The second item is proved as follows. Let
$E={G(x)}^2-4x{F(x)}^{q+1}$. Then
\[\begin{aligned}
E^q-E=\left({A_{m+1}(x)}^q+x^q{A_{m-1}(x)}^{q^2}\right)^{2}&-4x^q{A_m(x)}^{q(q+1)}\\
&-\left({A_{m+1}(x)}+x{A_{m-1}(x)}^{q}\right)^{2}+4x{A_m(x)}^{q+1}.
\end{aligned}\]
Consider
$A_{m+1}(x)^q\overset{\eqref{eq.defAA}}{=}(-A_m(x)-x^{q^{m-1}}A_{m-1}(x))^q=-A_m(x)^q-xA_{m-1}(x)^q$.
By substituting this and using \eqref{eq.defA}, we have
\[\begin{aligned}
E^q-E&=\left(-A_m(x)^q-xA_{m-1}(x)^q+x^q{A_{m-1}(x)}^{q^2}\right)^{2}-4x^q{A_m(x)}^{q(q+1)}\\
&-\left(-A_m(x)^q-x^q{A_{m-1}(x)}^{q^2}+x{A_{m-1}(x)}^{q}\right)^{2}+4x{A_m(x)}^{q+1}\\
&=4A_m(x)^q\left(xA_{m-1}(x)^q-x^q{A_{m-1}(x)}^{q^2}\right)-4x^q{A_m(x)}^{q(q+1)}+4x{A_m(x)}^{q+1}\\
&=4A_m(x)^q\left(xA_{m-1}(x)^q-x^q{A_{m-1}(x)}^{q^2}-x^q{A_m(x)}^{q^2}+xA_m(x)\right).
\end{aligned}\]
By the way, since
\[\begin{aligned}
x^q{A_{m-1}(x)}^{q^2}+x^q{A_m(x)}^{q^2}&=x^q(A_{m-1}(x)+A_{m}(x))^{q^2}\\
&\overset{\eqref{eq.defAA}}{=}-x^q(x^{q^{m-2}}A_{m-2}(x))^{q^2}=-x^{q+1}A_{m-2}(x)^{q^2},
\end{aligned}\]
we get
$E^q-E=4xA_m(x)^q\left(A_{m-1}(x)^q+x^{q}A_{m-2}(x)^{q^2}+A_m(x)\right)\overset{\eqref{eq.defA}}{=}0,$
that is, $E={G(x)}^2-4x{F(x)}^{q+1}\in \GF{q}.$

Finally, the third item is verified as follows:
\[\begin{aligned}
G(x)&=-A_{m+1}(x)-xA_{m-1}(x)^q\overset{\eqref{eq.defAA}}{=}A_m(x)^{q}+x^qA_{m-1}(x)^{q^2}-xA_{m-1}(x)^q\\
&\overset{\eqref{eq.defA}}{=}A_m(x)^{q}+x^qA_{m-1}(x)^{q^2}+x\left(x^{q}A_{m-2}(x)^{q^2}+A_m(x)\right)\\
&=x^q\left(A_{m-1}(x)+x^{q^{m-2}}A_{m-2}(x)\right)^{q^2}+A_m(x)^{q}+xA_m(x)\\
&=-x^qA_m(x)^{q^2}+A_m(x)^{q}+xA_m(x).
\end{aligned}\] \qed

\end{proof}

For $p$ even, we will further need the following proposition.
\begin{proposition}\label{Prop.Tr-Equalities}
 Let $p=2$. Let $a\in \GF{Q}$ with $G(a)\neq 0$.
 Let $E=\frac{aF(a)^{q+1}}{G^2(a)}$ and $H=\tr{d}\left(\frac{{Nr}^n_d(a)}{G^2(a)}\right)$. The
followings hold.
\begin{enumerate}
\item
\begin{equation}\label{tr_n}
\tr{n}(E)=mH.
\end{equation}
\item
\begin{equation}\label{tr_k}
T_k(E)=\frac{G(a)+F(a)^q}{G(a)}+\frac{k}{d}H.
\end{equation}
\end{enumerate}
\end{proposition}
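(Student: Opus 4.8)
The plan is to use characteristic $2$ to put $E$ into an explicit shape from which both traces follow at once. I would first record the structural facts that make the statement well posed. Evaluating \eqref{property_FG_1} at $x=a$ and using $2=0$ gives $G(a)^q=G(a)$; since $G(a)\in\GF{Q}$ as well, this forces $G(a)\in\GF{q}\cap\GF{Q}=\GF{p^d}$, and in particular $G^2(a)\in\GF{p^d}$. Moreover, because $\gcd(k,n)=d$ the Frobenius $x\mapsto x^q$ has order $m=n/d$ on $\GF{Q}$ and fixes $\GF{p^d}$ pointwise, so it generates the Galois group of $\GF{Q}/\GF{p^d}$; hence $N:=Nr^n_d(a)=\prod_{j=0}^{m-1}a^{q^j}=a^{1+q+\cdots+q^{m-1}}\in\GF{p^d}$. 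Consequently $c:=\frac{N}{G^2(a)}\in\GF{p^d}$, which is exactly what is needed for $H=\tr{d}(c)$ to make sense.

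The decisive step, which I expect to be the main obstacle, is an identity that decomposes $E$. Taking $r=m-1$ in \eqref{eq.Norm}, substituting $A_{m+1}(X)=G(X)+XA_{m-1}(X)^q$ (the characteristic-$2$ form of the definition of $G$), and evaluating at $X=a$, the right-hand exponent becomes $\frac{q(q^{m-1}-1)}{q-1}=q+q^2+\cdots+q^{m-1}$, so the right-hand side is $a^{q+\cdots+q^{m-1}}=N/a$. Multiplying through by $a$ and writing $B:=aA_{m-1}(a)^q$, the remaining terms assemble as $a^2A_{m-1}(a)^{2q}=B^2$ and $aA_{m-1}(a)^qG(a)=BG(a)$, giving
\[
aF(a)^{q+1}=N+G(a)\,B+B^2 .
\]
Dividing by $G^2(a)$ and setting $y:=\frac{B}{G(a)}=\frac{aA_{m-1}(a)^q}{G(a)}\in\GF{Q}$ yields the clean decomposition
\[
E=c+y+y^2,\qquad c=\frac{N}{G^2(a)}\in\GF{p^d}.
\]
The delicate part is organising the substitution so that the cross term and the square term combine into the Artin--Schreier shape $y+y^2$.

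Granting this decomposition, item~1 is immediate: in characteristic $2$ one has $\tr{n}(y^2)=\tr{n}(y)$, so $\tr{n}(y+y^2)=0$ and therefore $\tr{n}(E)=\tr{n}(c)$. Since $c\in\GF{p^d}$, transitivity $\tr{n}=\tr{d}\circ\tr[d]{n}$ together with $\tr[d]{n}(c)=mc$ gives $\tr{n}(c)=\tr{d}(mc)=m\,\tr{d}(c)=mH$, which is \eqref{tr_n}.

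For item~2 I would expand
\[
T_k(E)=\sum_{j=0}^{k-1}E^{p^j}=T_k(c)+\sum_{j=0}^{k-1}\bigl(y^{p^j}+y^{2p^j}\bigr).
\]
Since $y^{2p^j}=y^{p^{j+1}}$, the last sum telescopes in characteristic $2$ to $y+y^{p^k}=y+y^q$. Because $c\in\GF{p^d}$ and $d\mid k$, grouping the $k$ summands of $T_k(c)$ into $k/d$ blocks of length $d$ gives $T_k(c)=\frac{k}{d}T_d(c)=\frac{k}{d}H$. Finally, using \eqref{eq.defA} at $r=m-1$ in the form $a^qA_{m-1}(a)^{q^2}=A_{m+1}(a)+A_m(a)^q$ and the definition of $G$, I compute
\[
y+y^q=\frac{aA_{m-1}(a)^q+a^qA_{m-1}(a)^{q^2}}{G(a)}=\frac{G(a)+F(a)^q}{G(a)},
\]
whence $T_k(E)=\frac{G(a)+F(a)^q}{G(a)}+\frac{k}{d}H$, proving \eqref{tr_k}. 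The only case outside the scope of \eqref{eq.Norm} is $m=1$ (where $r=m-1=0$); there I would verify $aF(a)^{q+1}=N+G(a)B+B^2$ directly from the conventions $A_0=0$ and $F=A_1=1$, after which the same two computations apply.
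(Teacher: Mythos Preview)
Your proof is correct and follows essentially the same route as the paper: both use \eqref{eq.Norm} at $r=m-1$ together with the characteristic-$2$ relation $aA_{m-1}(a)^q=A_{m+1}(a)+G(a)$ to write $E=c+(\text{Artin--Schreier term})$ with $c=\frac{Nr^n_d(a)}{G^2(a)}\in\GF{p^d}$, after which both trace identities follow by telescoping. The only cosmetic difference is the choice of Artin--Schreier variable: the paper takes $z=\frac{A_{m+1}(a)}{G(a)}$ while you take $y=\frac{aA_{m-1}(a)^q}{G(a)}=z+1$, so that $y+y^2=z+z^2$ and the two computations coincide; your treatment is slightly more explicit about the telescoping and the $m=1$ edge case, but the argument is the same.
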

\begin{proof}
Regarding the fact that ${Nr}^{mk}_{k}(a)={Nr}^n_d(a)$ as $a\in
\GF{Q}$, we have
\[\begin{aligned}
E&=\frac{aF(a)^{q+1}}{G(a)^2}\overset{\eqref{eq.Norm}}{=}\frac{aA_{m-1}(a)^qA_{m+1}+{Nr}^n_d(a)}{G(a)^2}=\frac{\left(A_{m+1}(a)+G(a)\right)A_{m+1}+{Nr}^n_d(a)}{G(a)^2}\\
&=\frac{A_{m+1}(a)}{G(a)}+\left(\frac{A_{m+1}(a)}{G(a)}\right)^2+\frac{{Nr}^n_d(a)}{G(a)^2}.\\
\end{aligned}\]
Hence, immediately \eqref{tr_n} follows from the facts
${Nr}^n_d(a)\in \GF{p^d}$ and $G(a)\in \GF{p^{md}}\cap
\GF{p^{k}}=\GF{p^d}$ (which follows from \eqref{p=2G(a)inGF(q)} as
$a\in \GF{p^{md}}$). And also
\[\begin{aligned}
T_k\left(E\right)&=\frac{A_{m+1}(a)}{G(a)}+\left(\frac{A_{m+1}(a)}{G(a)}\right)^q+\frac{k}{d}H\overset{\eqref{p=2G(a)inGF(q)}}{=}\frac{A_{m+1}(a)+A_{m+1}(a)^q}{G(a)}+\frac{k}{d}H\\
&=\frac{G(a)+aA_{m-1}(a)^q+A_{m+1}(a)^q}{G(a)}+\frac{k}{d}H\\
&\overset{\eqref{eq.defAA}}{=}\frac{G(a)+aA_{m-1}(a)^q+\left(A_{m}(a)+a^{q^{m-1}}A_{m-1}(a)\right)^q}{G(a)}+\frac{k}{d}H\\
&=\frac{G(a)+F(a)^q}{G(a)}+\frac{k}{d}H.
\end{aligned}\]\qed
\end{proof}

\section{Rational zeros of $P_a(X)$}

By exploiting the results of previous sections, now we can
completely solve the equation $P_a(X)=0$ in arbitrary finite fields.

\subsection{$N_a=p^d+1$}

 \begin{lemma}\label{Lem.p^d+1cond} Let $a\in \GF{Q}^*$. The following are equivalent.
 \begin{enumerate}
\item\label{1} $N_a=p^d+1$ i.e. $P_a(X)$ has exactly $p^d+1$ zeros in $\GF{Q}$.
\item\label{2} $F(a)=0$, or equivalently by Proposition \ref{Prop.A_r(x)=0}, there exists
$u\in \GF{q^m}\setminus\GF{q^2}$ such that
$a=\frac{(u-u^q)^{q^2+1}}{(u-u^{q^2})^{q+1}}$.
\item\label{3} There exists $u\in \GF{Q}\setminus\GF{p^{2d}}$ such that
$a=\frac{(u-u^q)^{q^2+1}}{(u-u^{q^2})^{q+1}}$. Then the $p^d+1$
zeros in $\GF{Q}$ of $P_a(X)$ are $x_0=\frac{-1}{1+(u-u^q)^{q-1}}$
and $x_\alpha=\frac{-(u+\alpha)^{q^2-q}}{1+(u-u^q)^{q-1}}$ for
$\alpha \in \GF{p^d}$.
\end{enumerate}
 \end{lemma}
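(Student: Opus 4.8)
The plan is to establish the cycle $(1)\Rightarrow(2)$, $(3)\Rightarrow(2)$, $(3)\Rightarrow(1)$, and, as the substantial step, $(2)\Rightarrow(3)$; together these give all three equivalences. Two implications are immediate. For $(1)\Rightarrow(2)$: if $F(a)\neq0$ then Lemma~\ref{lem.quadEq} forces every $\GF{Q}$-zero to satisfy the genuine quadratic~\eqref{eq.quadEq}, so $N_a\le2<p^d+1$; hence $N_a=p^d+1$ implies $F(a)=0$, and the reformulation of $F(a)=A_m(a)=0$ via $u$ is exactly Proposition~\ref{Prop.A_r(x)=0} with $r=m$ (note that $F(a)=0$ forces $G(a)=0$ by~\eqref{property_FG_3}, so~\eqref{eq.quadEq} degenerates to $0=0$ and imposes nothing, consistent with $N_a>2$). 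For $(3)\Rightarrow(2)$: since $\gcd(k,md)=d$ one checks $\GF{q^2}\cap\GF{Q}=\GF{p^{2d}}\cap\GF{Q}$ (both equal $\GF{p^{d\gcd(2,m)}}$), so a $u\in\GF{Q}\setminus\GF{p^{2d}}$ is automatically a $u\in\GF{q^m}\setminus\GF{q^2}$. The degenerate ranges $m\le2$ are disposed of at once: there $F=A_1=1$ or $F=A_2=-1$ never vanishes, so all three statements are simultaneously false; thus one may assume $m\ge3$, where Proposition~\ref{Prop.A_r(x)=0} applies.

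Next I would carry out $(3)\Rightarrow(1)$ by direct verification. Writing $v=u-u^q$ and using $u-u^{q^2}=v+v^q$, the parameter rewrites as $a=\frac{v^{q^2+1}}{(v+v^q)^{q+1}}$ and $1+v^{q-1}=\frac{u-u^{q^2}}{u-u^q}$. A short substitution gives $x_0^{q+1}+x_0=\frac{-v^{q^2-q}}{(1+v^{q-1})(1+v^{q^2-q})}=-a$, i.e.\ $P_a(x_0)=0$. For the remaining zeros one first notes that $\alpha\in\GF{p^d}\subseteq\GF{q}$ gives $\alpha^q=\alpha$, so $(u+\alpha)-(u+\alpha)^q=v$ and the denominator is unchanged; then setting $w=(u+\alpha)^{q-1}$ (so that $v^{q-1}=w(1-w)^{q-1}$ and $(u+\alpha)^{q^2-q}=w^q$) and using the Frobenius identities $(1-w)^{q^j}=1-w^{q^j}$ repeatedly collapses $P_a(x_\alpha)$ to $0$. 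All these elements lie in $\GF{Q}$ because $\GF{Q}$ is closed under $z\mapsto z^q$, and they are distinct because $x_\alpha=x_\beta$ (resp.\ $x_0=x_\alpha$) would force $(u+\alpha)/(u+\beta)\in\GF{q}^\star$ (resp.\ $u+\alpha\in\GF{q}$), hence $u\in\GF{q}\cap\GF{Q}=\GF{p^d}\subseteq\GF{p^{2d}}$, contrary to hypothesis. This produces $p^d+1$ distinct $\GF{Q}$-zeros, so $N_a\ge p^d+1$ and thus $N_a=p^d+1$ by Bluher~\cite{Bluher2004}.

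The crux is $(2)\Rightarrow(3)$: Proposition~\ref{Prop.A_r(x)=0} only supplies a parameter $u\in\GF{q^m}$, whereas the zero formulas make sense over $\GF{Q}$ only once $v^{q-1}$ and $(u+\alpha)^{q^2-q}$ lie in $\GF{Q}$, so one must descend the parameter to the subfield $\GF{Q}$. I would organize this around two facts. First, the scaling invariance $a(cv)=a(v)$ for $c\in\GF{q}^\star$ (immediate from $cv+(cv)^q=c(v+v^q)$ and $c^{q-1}=1$), which shows the fibre of $v\mapsto a(v)$ is a union of $\GF{q}^\star$-orbits and pins $v$ down only up to $\GF{p^d}^\star$-scaling inside $\GF{Q}$. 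Second, since $N_a\ge2$ there is a $\GF{Q}$-rational zero $x^\star$, and requiring $x^\star$ to play the role of the special zero $x_0$ is equivalent to solving the $\GF{q}$-linearized equation $-x^\star u^{q^2}-u^q+(x^\star+1)u=0$ for some $u\in\GF{Q}\setminus\GF{p^d}$; its kernel always contains $\GF{q}$, and the task is to show that its intersection with $\GF{Q}$ has dimension at least $2$ over $\GF{p^d}$. Concretely this reduces to proving that $-1-1/x^\star=v^{q-1}$ is a $(q-1)$-th power in $\GF{Q}$ for at least one $\GF{Q}$-rational zero $x^\star$, and this is the main obstacle, the point where the genuine arithmetic of the problem enters; an alternative route is a counting argument comparing $\#\{a\in\GF{Q}:F(a)=0\}$ with Bluher's exact value of $M_{p^d+1}$, again using the $\GF{p^d}^\star$-scaling to compute the fibre sizes. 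Once a parameter $u\in\GF{Q}\setminus\GF{p^{2d}}$ is secured, part $(1)$ and the explicit list of zeros follow from the already-established $(3)\Rightarrow(1)$, closing the cycle.
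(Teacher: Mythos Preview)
Your implications $(1)\Rightarrow(2)$, $(3)\Rightarrow(2)$, and $(3)\Rightarrow(1)$ are correct and essentially match the paper's computations. The genuine gap is $(2)\Rightarrow(3)$, which you yourself flag as ``the main obstacle'' without resolving. Your first line of attack there opens with ``since $N_a\ge2$ there is a $\GF{Q}$-rational zero $x^\star$'', but at that point in your cycle nothing derived from $F(a)=0$ alone gives any $\GF{Q}$-zero at all; you are tacitly assuming $(1)$, which is circular. Your second line (the counting comparison with Bluher's value of $M_{p^d+1}$) is only named, not carried out.

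The paper avoids the descent problem by organising the equivalences differently: it never proves $(2)\Rightarrow(3)$ directly. From $F(a)=0$ and the $\GF{q^m}$-parameter $u$ supplied by Proposition~\ref{Prop.A_r(x)=0} it verifies (exactly as in your $(3)\Rightarrow(1)$, but now with $\alpha$ ranging over all of $\GF{q}$) that the listed $q+1$ elements are zeros, so $P_a$ \emph{splits} over $\GF{q^m}$; then it invokes Corollary~7.2 of Bluher~\cite{Bluher2004}, which states that splitting in $\GF{q^m}$ is equivalent to having exactly $p^d+1$ zeros in $\GF{Q}$. This gives $(2)\Rightarrow(1)$ without ever descending $u$ to $\GF{Q}$. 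For $(1)\Leftrightarrow(3)$ the paper then executes precisely the counting argument you mention as an alternative: the map $\Psi\colon u\mapsto a$ on $S_0=\GF{Q}\setminus\GF{q^2}$ is factored as $u\mapsto u-u^q$ ($p^d$-to-one onto trace-zero elements), then $v\mapsto v^{q-1}$ ($(p^d-1)$-to-one), then $w\mapsto w^q/(1+w)^{q+1}$ (at most $(p^d+1)$-to-one, since each preimage produces a zero of $P_a$); the resulting lower bound on $\#\Psi(S_0)$ is matched against Bluher's exact value of $\#S$ to force surjectivity. To complete your proof you must either actually execute this counting, or insert the Bluher splitting criterion to obtain $(2)\Rightarrow(1)$ first, after which your cycle closes.
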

\begin{proof}
(Item~\ref{1} $\Longleftrightarrow$ Item~\ref{2})

 We already showed that if $F(a)\neq 0$, then $N_a\leq 2$ i.e. $N_a\neq
 p^d+1$.

If $F(a)=0$ i.e.  there exists $u\in \GF{q^m}\setminus\GF{q^2}$ such
that $a=\frac{(u-u^q)^{q^2+1}}{(u-u^{q^2})^{q+1}}$, then  the set
given by
\[\bigcup_{\alpha \in \GF{q}} \{\frac{-(u+\alpha)^{q^2-q}}{1+(u-u^q)^{q-1}}\}\bigcup \{\frac{-1}{1+(u-u^q)^{q-1}}\}\]
is the set of all $q+1$ zeros of $P_{a}(X)$. In fact, the
cardinality of this set is exactly $q+1$ as $u$ is not in $\GF{q}$.
Also we have
\[\begin{aligned}
P_a\left(\frac{-1}{1+(u-u^q)^{q-1}}\right)&=\frac{-1}{1+(u-u^q)^{q-1}}\left(1-\frac{1}{1+(u-u^q)^{q-1}}\right)^q+\frac{(u-u^q)^{q^2+1}}{(u-u^{q^2})^{q+1}}\\
&=\frac{-(u-u^q)}{u-u^{q^2}}\left(\frac{(u-u^q)^q}{u-u^{q^2}}\right)^q+\frac{(u-u^q)^{q^2+1}}{(u-u^{q^2})^{q+1}}=0
\end{aligned}\]
and
\[\begin{aligned}
&P_a\left(\frac{-(u+\alpha)^{q^2-q}}{1+(u-u^q)^{q-1}}\right)=\frac{-(u+\alpha)^{q^2-q}}{1+(u-u^q)^{q-1}}\left(1+\frac{-(u+\alpha)^{q^2-q}}{1+(u-u^q)^{q-1}}\right)^q+\frac{(u-u^q)^{q^2+1}}{(u-u^{q^2})^{q+1}}\\
&=\frac{-(u-u^q)}{(u-u^{q^2})^{q+1}(u+\alpha)^{q}}\left((u-u^{q^2})(u+\alpha)^{q}-(u-u^q)(u+\alpha)^{q^2}\right)^q+\frac{(u-u^q)^{q^2+1}}{(u-u^{q^2})^{q+1}}\\
&=\frac{-(u-u^q)}{(u-u^{q^2})^{q+1}(u+\alpha)^q}\left((u-u^{q^2})(u^{q}+\alpha)-(u-u^q)(u^{q^2}+\alpha)\right)^q+\frac{(u-u^q)^{q^2+1}}{(u-u^{q^2})^{q+1}}\\
&=\frac{-(u-u^q)}{(u-u^{q^2})^{q+1}(u+\alpha)^q}\left((u-u^q)^q(u+\alpha)\right)^q+\frac{(u-u^q)^{q^2+1}}{(u-u^{q^2})^{q+1}}=0.\\
\end{aligned}\]
 Thus  $P_a(X)$
splits in $\GF{q^m}$. Corollary 7.2 of \cite{Bluher2004} states that
$P_a(X)$ splits in $\GF{q^m}$ if and only if $P_a(X)$ has exactly
$p^d+1$ zeros in $\GF{Q}$.

(Item~\ref{1} $\Longleftrightarrow$ Item~\ref{3})

To begin with, define $S_0=\GF{Q}\setminus \GF{q^2}$, $S_1=\{v\in
\GF{Q}\setminus \GF{q} \mid Tr_{d}^{n}(v)=0\}$, $S_2=\{v^{q-1}\mid
v\in S_1\}$ and $S=\{a\in \GF{Q} \mid N_a=p^d+1\}$.

 Now, we will show that the mapping
\[\Psi: u\in S_0\longmapsto
\frac{(u-u^q)^{q^2+1}}{(u-u^{q^2})^{q+1}}\in S,\] which is
well-defined by Proposition~\ref{Prop.A_r(x)=0} and the equivalence
between Item 1 and Item 2, is surjective.

Regarding
$\frac{(u-u^q)^{q^2+1}}{(u-u^{q^2})^{q+1}}=\frac{((u-u^q)^{q-1})^q}{(1+(u-u^q)^{q-1})^{q+1}}$,
we can write $\Psi=\varphi_3\circ\varphi_2\circ\varphi_1$ where
$\varphi_1: u\in S_0 \longmapsto u-u^q \in S_1$, $\varphi_2: v\in
S_1 \longmapsto v^{q-1} \in S_2$, $\varphi_3: w\in S_2 \longmapsto
\frac{w^q}{(1+w)^{q+1}}\in S$.

Consider $\varphi_1(u+\GF{p^d})=\varphi_1(u)$ for any $u\in S_0$ and
$\#S_1=p^{(m-1)d}-(p^d-(p^d-1)\cdot(m\mod 2))=(p^{md}-p^{(2-m
\mod2)d})/{p^d}=\#S_0/p^d$. Therefore $\varphi_1$ is $p^d$-to-one
and surjective. Next, consider that $\varphi_2(v_1)=\varphi_2(v_2)$
for $v_1, v_2\in \GF{Q}$ if and only if $v_2=\beta v_1$ for some
$\beta \in \GF{p^d}^*$ and that if $v_1\in S_1$ then $\beta v_1 \in
S_1$ for any $\beta \in \GF{p^d}^*$ since $Tr_{d}^{n}(\beta
v_1)=\beta Tr_{d}^{n}(v_1)=0$. Hence $\varphi_2$ is $(p^d-1)$-to-one
and surjective. On the other hand, if $a=\varphi_3(w)$ for $w\in
S_2$, then
$P_a(-\frac{1}{1+w})=\left(-\frac{1}{1+w}\right)^{q+1}+\left(-\frac{1}{1+w}\right)+\frac{w^q}{(1+w)^{q+1}}=0$.
Since $a\in S$ and so $N_a=p^d+1$, there are at most $p^d+1$  such
$w\in S_2$ that $\varphi_3(w)=a$. Therefore we get
\[ \#\Psi(S_0)\geq \frac{\#S_2}{p^d+1}=
\frac{p^{(m-1)d}-p^{(1-m \mod2)d}}{p^{2d}-1}.\] Since
$\#S=\frac{p^{(m-1)d}-p^{(1-m \mod2)d}}{p^{2d}-1}$ by Theorem 5.6 of
\cite{Bluher2004}, we have a sequence of inequalities $\#S\leq
\#\Psi(S_0) \leq \#S$ which concludes that $\Psi(S_0)=S$ i.e. $\Psi$
is
 surjective. (Note that it also follows that $\varphi_3$ is $(p^{d}+1)$-to-one and $\Psi$ is
 $p^d(p^{2d}-1)$-to-one.) This means that Item~\ref{1} and
 Item~\ref{3} are equivalent.
 \qed
\end{proof}

\subsection{$N_a\leq 2$: Odd $p$}

\begin{theorem}\label{Theo.oddp_N<3}
Let $p$ be odd. Let $a\in \GF{Q}$ and $E=G(a)^2-4a{F(a)}^{q+1}$.
\begin{enumerate}
\item $N_a=1$ if and only if $F(a)\neq 0$ and $E=0.$ In this case, the
unique zero in $\GF{Q}$ of $P_a(X)$ is $-\frac{G(a)}{2F(a)}.$
\item $N_a=0$ if and only if $E$ is not a quadratic residue in $\GF{p^d}$ (i.e.
$E^{\frac{p^d-1}{2}}\neq 0,1).$
\item $N_a=2$ if and only if  $E$ is a non-zero quadratic residue in $\GF{p^d}$ (i.e.
$E^{\frac{p^d-1}{2}}=1$). In this case, the two zeros in $\GF{Q}$ of
$P_a(X)$ are $x_{1,2}=\frac{\pm E^{\frac{1}{2}}-G(a)}{2F(a)}$, where
$E^{\frac{1}{2}}$ represents a quadratic root in $\GF{p^d}$ of $E$.
\end{enumerate}
\end{theorem}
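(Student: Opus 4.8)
The plan is to build on Lemma~\ref{lem.quadEq}, which already tells us that every $\GF{Q}$-zero of $P_a(X)$ satisfies the quadratic equation \eqref{eq.quadEq}, namely $F(a)x^2+G(a)x+aF^q(a)=0$. Since $p$ is odd and we are in the regime $F(a)\neq 0$ (the case $F(a)=0$ being precisely $N_a=p^d+1$ by Lemma~\ref{Lem.p^d+1cond}), I would first divide through by $F(a)$ and complete the square in the standard way, writing $\bigl(2F(a)x+G(a)\bigr)^2=G(a)^2-4aF(a)F^q(a)=G(a)^2-4aF(a)^{q+1}=E$. This reduces the whole problem to understanding when the discriminant $E$ has a square root that yields a solution lying in $\GF{Q}$.

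The crucial leverage here is Proposition~\ref{Prop.properties_FG}(2), which asserts $E=G(a)^2-4a F(a)^{q+1}\in\GF{q}$; in fact, since $a\in\GF{Q}=\GF{p^{md}}$ and $G(a)$ lies in $\GF{p^d}$ together with the norm and trace structure exploited earlier, I expect $E$ to lie in $\GF{p^d}$. I would nail this down first, because it is what lets me phrase the residue criteria over the small field $\GF{p^d}$ rather than over $\GF{Q}$. Granting $E\in\GF{p^d}$, the three cases fall out from elementary quadratic-residue theory over $\GF{p^d}$: if $E=0$ there is a unique (double) root $x=-G(a)/(2F(a))$; if $E$ is a nonzero square in $\GF{p^d}$ then $E^{1/2}\in\GF{p^d}\subseteq\GF{Q}$ and the two roots $x_{1,2}=\bigl(\pm E^{1/2}-G(a)\bigr)/(2F(a))$ genuinely lie in $\GF{Q}$; and if $E$ is a non-square then $E^{1/2}\notin\GF{p^d}$.

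The main obstacle, and the step I would handle most carefully, is the converse direction in the non-square and unique-root cases: I must argue that when $E$ is a non-square in $\GF{p^d}$ there is \emph{no} $\GF{Q}$-zero at all, not merely that the formula fails. The quadratic \eqref{eq.quadEq} is only a necessary condition on the zeros, so abstractly a candidate $x$ could satisfy \eqref{eq.quadEq} without satisfying $P_a(x)=0$, or a root of the quadratic could fail to lie in $\GF{Q}$. I would resolve this by a counting argument consistent with Bluher's classification: since $F(a)\neq 0$ forces $N_a\in\{0,1,2\}$, and the three discriminant conditions ($E=0$, $E$ a nonzero square, $E$ a non-square in $\GF{p^d}$) partition all possibilities for $a$ with $F(a)\neq 0$, it suffices to exhibit that the stated root formulas actually produce genuine zeros of $P_a$ in the square and double-root cases, and that the counts match. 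For the forward implications I would verify directly that the claimed $x_{1,2}$ (resp.\ $-G(a)/(2F(a))$) satisfy $P_a(x)=0$ by substituting into $P_a$ and using \eqref{property_FG_1} and \eqref{property_FG_3} to control $x^q$ and hence $x^{q+1}$; the relation $(G(x)-2F(x))^q=-G(x)$ is exactly the ingredient that makes the Frobenius image of these candidate roots close up correctly. Once both directions and the partition are in place, the equivalences in all three items follow, with the explicit zeros read off from the completed square.
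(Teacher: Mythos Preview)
Your overall plan coincides with the paper's: reduce to the quadratic \eqref{eq.quadEq}, complete the square, and substitute the candidate roots back into $P_a$ using \eqref{property_FG_1}. Two points need correction.

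First, a minor slip: you justify $E\in\GF{p^d}$ by asserting $G(a)\in\GF{p^d}$, but that fact is only established for $p=2$ (equation~\eqref{p=2G(a)inGF(q)}); it is not claimed for odd $p$. The correct route is the one already at hand: Proposition~\ref{Prop.properties_FG}(2) gives $E\in\GF{q}$, and since $a\in\GF{Q}$ obviously $E\in\GF{Q}$, hence $E\in\GF{q}\cap\GF{Q}=\GF{p^d}$. This is exactly how the paper argues.

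Second, and this is the genuine gap: your ``counting argument'' does not close the non-square case. The forward verifications yield only the inclusions $\{a:E=0\}\subseteq\{a:N_a=1\}$ and $\{a:E\ \text{nonzero square}\}\subseteq\{a:N_a=2\}$; taking complements in the partition then gives $\{a:N_a=0\}\subseteq\{a:E\ \text{non-square}\}$, which is the \emph{wrong} direction. To turn these into equalities by counting you would have to compute independently how many $a$ satisfy $E=0$ or $E$ a nonzero square and match those to Bluher's $M_1,M_2$, and you give no indication how to do that. The paper sidesteps this entirely: it substitutes $x_{1,2}=\bigl(\pm E^{1/2}-G(a)\bigr)/(2F(a))$ into $P_a$ \emph{without} assuming $E^{1/2}\in\GF{p^d}$, writes $(E^{1/2})^q=E^{1/2}+\delta$, and using only \eqref{property_FG_1} obtains
\[
P_a(x_{1,2})=\frac{(\pm E^{1/2}-G(a))\,\delta}{4F(a)^{q+1}}.
\]
Since $\pm E^{1/2}=G(a)$ would force $4aF(a)^{q+1}=0$, this is a clean biconditional: the quadratic roots are zeros of $P_a$ if and only if $\delta=0$, i.e.\ $E^{1/2}\in\GF{q}$. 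Together with $x_{1,2}\in\GF{Q}\iff E^{1/2}\in\GF{Q}$, one gets $N_a\ge1\iff E^{1/2}\in\GF{q}\cap\GF{Q}=\GF{p^d}$, and all three items follow at once. In fact your own substitution plan via \eqref{property_FG_1} would deliver this biconditional if you carried it out for a generic square root $E^{1/2}$; the error is restricting that computation to the square case and then trying to patch the remainder by a partition/counting step that does not actually pin down $N_a$.
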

\begin{proof}To begin with, note $E\in \GF{q}$ by
\eqref{property_FG_2} and so $E\in \GF{q}\cap \GF{Q}=\GF{p^d}$.

Assume $F(a)\neq 0.$ Then the equation \eqref{eq.quadEq} can be
rewritten as
\begin{equation}\label{quadeq_odd}
\left(x+\frac{G(a)}{2F(a)}\right)^2=\frac{E}{4{F(a)}^2}.
\end{equation}
Now, we will show that the solutions $x_{1,2}=\frac{\pm
E^{\frac{1}{2}}-G(a)}{2F(a)}$ of \eqref{quadeq_odd} become the zeros
of $P_a(X)$ if and only if $E$ is a quadratic residue in $\GF{q}$.
In fact, letting
$\left(E^{\frac{1}{2}}\right)^q=E^{\frac{1}{2}}+\delta$, we have
\[\begin{aligned}
P_a(x_{1,2})&=x_{1,2}(x_{1,2}+1)^q+a=\frac{\pm E^{\frac{1}{2}}-G(a)}{2F(a)}\left(1+\frac{\pm E^{\frac{1}{2}}-G(a)}{2F(a)}\right)^q+a\\
&=\frac{(\pm E^{\frac{1}{2}}-G(a))\left(\pm
E^{\frac{1}{2}}+\delta+(2F(a)-G(a))^q\right)+4a{F(a)}^{q+1}}{4{F(a)}^{q+1}}\\
&\overset{\eqref{property_FG_1}}{=}\frac{(\pm
E^{\frac{1}{2}}-G(a))\left(\pm
E^{\frac{1}{2}}+\delta+G(a)\right)+4a{F(a)}^{q+1}}{4{F(a)}^{q+1}}=\frac{(\pm
E^{\frac{1}{2}}-G(a))\delta}{4{F(a)}^{q+1}},
\end{aligned}\]
and so $P_a(x_{1,2})=0$ if and only if $\delta=0$, that is,
$E^{\frac{1}{2}}\in \GF{q}$. On the other hand, $x_{1,2}\in \GF{Q}$
if and only if $E^{\frac{1}{2}}\in \GF{Q}$. Combining above
discussion with Lemma \ref{Lem.p^d+1cond} completes the proof.\qed
\end{proof}
\begin{remark}
In the last two cases of Theorem~\ref{Theo.oddp_N<3} (i.e. the cases
of $N_a=0$ or $2$), the condition $F(a)\neq0$ is implied because
$E\neq 0$ implies $F(a)\neq0$. Indeed, if $F(a)=0$, then from
Equality~\eqref{property_FG_3} $G(a)=0$ follows and so $E=0$.
\end{remark}
\subsection{$N_a\leq 2$: $p=2$}

When $p=2$, Item 1 and 2 of Proposition~\ref{Prop.properties_FG} are
reduced to
\begin{equation}\label{p=2G(a)inGF(q)}
 G(x)\in \GF{q} \text{ for any } x\in \GF{q^m}.
\end{equation}

\begin{theorem}\label{Theo.p=2.N<3}
Let $p=2$ and $a\in \GF{Q}$. Let
$H=\tr{d}\left(\frac{{Nr}^n_d(a)}{G^2(a)}\right)$ and
$E=\frac{aF(a)^{q+1}}{G^2(a)}$.
\begin{enumerate}
\item $N_a=1$ if and only if $F(a)\neq 0$ and $G(a)=0$. In this case, $(aF(a)^{q-1})^{\frac{1}{2}}$ is the unique
zero in $\GF{Q}$ of $P_a(X)$.
\item $N_a=0$  if and only if  $G(a)\neq 0$ and $H\neq 0$.
\item $N_a=2$  if and only if  $G(a)\neq 0$ and $H=0$. In this
case the two zeros in $\GF{Q}$ are $x_1=\frac{G(a)}{F(a)}\cdot
T_n\left(\frac{E}{\zeta+1}\right)$ and $x_2=x_1+\frac{G(a)}{F(a)}$,
where $\zeta\in \mu_{Q+1}\setminus \{1\}$.
\end{enumerate}
\end{theorem}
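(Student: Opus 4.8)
The plan is to work throughout in the regime $F(a)\neq 0$, since by Lemma~\ref{Lem.p^d+1cond} the complementary case $F(a)=0$ is exactly $N_a=2^d+1$; moreover \eqref{property_FG_3} gives $F(a)=0\Rightarrow G(a)=0$, so the hypothesis $G(a)\neq 0$ in items~2 and~3 already forces $F(a)\neq 0$. Hence in every case of the statement the $\GF{Q}$-zeros satisfy the quadratic \eqref{eq.quadEq}, which I would rewrite (dividing by $F(a)$, in characteristic $2$) as $x^2=bx+c$ with $b=G(a)/F(a)$ and $c=aF(a)^{q-1}$, so that $c/b^2=E$; using $G(a)\in\GF{q}$ (i.e.\ \eqref{p=2G(a)inGF(q)}), hence $G(a)^q=G(a)$, one extracts the identity $a=Eb^{q+1}$ that will drive the whole computation.

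First I would dispose of item~1. When $G(a)=0$ the quadratic degenerates to $x^2=c=aF(a)^{q-1}$, which in characteristic $2$ has the single root $x_0=(aF(a)^{q-1})^{1/2}$. To check that $x_0$ is a genuine zero I would argue by squaring: since the Frobenius is injective, $P_a(x_0)=0$ is equivalent to $x_0^{2(q+1)}+x_0^2+a^2=0$, and substituting $x_0^2=aF(a)^{q-1}$ this reduces, after dividing by $a$, exactly to the relation obtained by dividing \eqref{property_FG_3} (with $G(a)=0$) by $F(a)$. Thus $N_a=1$. Conversely $N_a=1$ forces $F(a)\neq 0$ (as $N_a\neq 2^d+1$) and, by the item~3 analysis below, forces $G(a)=0$.

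The heart of the argument is item~3 (and with it item~2), under $G(a)\neq 0$, i.e.\ $b\neq 0$. Writing a candidate zero as $x=by$ turns \eqref{eq.quadEq} into the Artin--Schreier equation $y^2+y=E$, whose two solutions are $y_0$ and $y_0+1$. The key computation is to evaluate $P_a$ on such a candidate: from $y^2=y+E$ one gets $y^{q}=y^{2^{k}}=y+T_k(E)$ and hence $y^{q+1}=(1+T_k(E))y+E$, so that $P_a(by)=b^{q+1}y^{q+1}+by+a=\bigl(b^{q+1}(1+T_k(E))+b\bigr)y+\bigl(b^{q+1}E+a\bigr)$, where the constant term vanishes identically by $a=Eb^{q+1}$. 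Since $y_0\neq 0$ (as $y_0=0$ would force $E=0$ and $a=0$), $by_0$ is a genuine zero precisely when the coefficient $b^{q+1}(1+T_k(E))+b$ vanishes, that is (dividing by $b$ and using $b^q=G(a)/F(a)^q$) precisely when $T_k(E)=\frac{G(a)+F(a)^q}{G(a)}$; by \eqref{tr_k} this is exactly $\tfrac{k}{d}H=0$. Meanwhile the roots $y_0,y_0+1$ lie in $\GF{Q}$ at all iff $\tr{n}(E)=0$, which by \eqref{tr_n} is $mH=0$.

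It remains to collapse these two conditions into the single criterion $H=0$, and this is the step I expect to be the main obstacle, as it forces the interplay of the two independent trace identities \eqref{tr_n} and \eqref{tr_k}. Here I would invoke $\gcd(m,k/d)=\gcd(n,k)/d=1$, so that $m$ and $k/d$ are not both even: for $H=1$ at least one of $mH,\ \tfrac{k}{d}H$ is nonzero, so either the quadratic has no $\GF{Q}$-root or its roots fail the genuineness test, giving $N_a=0$; for $H=0$ both conditions hold, both $by_0$ and $b(y_0+1)$ are zeros, and $N_a=2$. Finally, to exhibit the zeros I would take $\zeta\in\mu_{Q+1}\setminus\{1\}$ and set $y_0=T_n\!\bigl(\tfrac{E}{\zeta+1}\bigr)$; telescoping $T_n$ gives $y_0^2+y_0=\bigl(\tfrac{E}{\zeta+1}\bigr)^{Q}+\tfrac{E}{\zeta+1}$, and since $E\in\GF{Q}$ and $\zeta^{Q}=\zeta^{-1}$ this equals $E$, so $y_0$ is the required root of $y^2+y=E$ lying in $\GF{Q}$, yielding $x_1=\tfrac{G(a)}{F(a)}y_0$ and $x_2=x_1+\tfrac{G(a)}{F(a)}$.
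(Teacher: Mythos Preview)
Your proposal is correct and follows essentially the same route as the paper's own proof: reduce via Lemma~\ref{lem.quadEq} to the Artin--Schreier equation $y^2+y=E$ (your substitution $x=by$ is exactly the paper's $y=\tfrac{F(a)}{G(a)}x$), use \eqref{property_FG_3} for the $G(a)=0$ case, compute $y^q=y+T_k(E)$ to obtain the genuineness criterion $T_k(E)=\tfrac{G(a)+F(a)^q}{G(a)}$, and then invoke \eqref{tr_n}--\eqref{tr_k} together with $\gcd(m,k/d)=1$ to collapse the two trace conditions to $H=0$. The only cosmetic difference is that the paper derives $P_a(x)$ via $P_a(x)=x(x^q+1)+a$ and then uses \eqref{eq.quadEq} to simplify, whereas you expand $P_a(by)$ directly using the identity $a=Eb^{q+1}$; both yield the same linear expression in $y$.
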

\begin{proof} Let assume $F(a)\neq 0$. If $G(a)=0$, then the equation \eqref{eq.quadEq} has unique solution
$x_0=(aF(a)^{q-1})^{1/2}$. Then
$P_a^2(x_0)=\frac{a}{F(a)}\left(a^qF^{q^2}(a)+F^{q}(a)+aF(a)\right)\overset{\eqref{property_FG_3}}{=}\frac{a}{F(a)}G(a)=0$
 and thus it follows that $P_a(X)$ has exactly one zero
$(aF(a)^{q-1})^{1/2}$ in $\GF{Q}$ when $G(a)=0$.

Now consider the case of $G(a)\neq 0$. Note that
\eqref{property_FG_3} shows that $G(a)\neq 0$ implies $F(a)\neq 0$.
The equation \eqref{eq.quadEq} can be rewritten as
$\left(\frac{F(a)}{G(a)}x\right)^2+\frac{F(a)}{G(a)}x=E$ and so it
has a solution in $\GF{Q}$  if and only if
\begin{equation}\label{tr_n=0}
\tr{n}\left(E\right)=0.
\end{equation}
In case of existence of solution, \eqref{eq.quadEq} has exactly two
solutions $x_1$ and $x_2$ in $\GF{Q}$. Indeed,
$\left(\frac{F(a)}{G(a)}x_1\right)^2+\frac{F(a)}{G(a)}x_1=\left(\frac{F(a)}{G(a)}x_2\right)^2+\frac{F(a)}{G(a)}x_2=
T_n\left(\frac{E}{\zeta+1}\right)^2+T_n\left(\frac{E}{\zeta+1}\right)=\left(\frac{E}{\zeta+1}\right)^{Q}+\left(\frac{E}{\zeta+1}\right)
=E(\frac{1}{\frac{1}{\zeta}+1}+\frac{1}{\zeta+1})=E$, and so $x_1$
and $x_2$ are two solutions of \eqref{eq.quadEq}. And, both $x_1$
and  $x_2$ are in $\GF{Q}$ since
$T_n\left(\frac{E}{\zeta+1}\right)^Q+T_n\left(\frac{E}{\zeta+1}\right)=T_n(E)=\tr{n}\left(E\right)\overset{\eqref{tr_n=0}}{=}0$
i.e. $T_n\left(\frac{E}{\zeta+1}\right)\in \GF{Q}$.

Let $x$ be a solution of \eqref{eq.quadEq}.  Then we have
$T_k\left(\left(\frac{F(a)}{G(a)}x\right)^2+\frac{F(a)}{G(a)}x\right)=T_k(E)$,
i.e. $\left(\frac{F(a)}{G(a)}x\right)^q+\frac{F(a)}{G(a)}x=T_k(E)$,
hence
$x^q=\left(\frac{G(a)}{F(a)}\right)^q\left(\frac{F(a)}{G(a)}x+T_k(E)\right)\overset{\eqref{p=2G(a)inGF(q)}}{=}\frac{F(a)x+G(a)T_k(E)}{F(a)^q}$
and
$P_a(x)=x(x^q+1)+a=\frac{x\left(F(a)x+G(a)T_k(E)+F(a)^q\right)}{F(a)^q}+a\overset{\eqref{eq.quadEq}}{=}\frac{x\left(G(a)T_k(E)+F(a)^q+G(a)\right)}{F(a)^q}$.

Thus, it follows that the solution $x$ of  \eqref{eq.quadEq} is zero
of $P_a(X)$ if and only if
\begin{equation}\label{tr_k=0}
T_k\left(E\right)=\frac{G(a)+F(a)^q}{G(a)}.
\end{equation}
Equalities \eqref{tr_n}, \eqref{tr_k}, \eqref{tr_n=0} and
\eqref{tr_k=0} together leads us to conclude that when $G(a)\neq 0$,
$P_a(X)$ has a zero (equivalently, exactly two zeros) in $\GF{Q}$ if
and only if $mH=\frac{k}{d}H=0$ which is equivalent to $H=0$ since
at least one of $m$ and $k/d$ must be odd as $\gcd(m, k/d)=1$.

Combining above discussion with Lemma \ref{Lem.p^d+1cond} completes
the proof.
 \qed
\end{proof}

\begin{remark}
When $p=2$,  $A_r(X)$ defined in this paper coincides with $C_r(X)$
introduced in \cite{HK2010}. Many of our results specific to $p=2$
also appears in \cite{HK2010} with relatively longer and complicate
proof.
\end{remark}

\begin{remark}
On the other hand, very recently,  the number of  roots of
linearized and projective polynomials was studied in
\cite{CMPZ2019,MS2019}. In particular, criteria for which $P_a(X)$
has $0$, $1$, $2$ or $p^d+1$ roots were stated by Theorem 8 of
\cite{MS2019} using some polynomial sequence $G_r(X)$ which are
related by $A_r(X)=G_{r-1}(X)^q$ with $A_r(X)$ defined in this
paper. Using the notations of our paper, Theorem 8 of \cite{MS2019}
states that $N_a=p^d+1$ if and only if $A_m(a)=0$ and $A_{m+1}(a)\in
\GF{p^d}$. As the first note, here, the condition $A_{m+1}(a)\in
\GF{p^d}$ is surplus because this follows from the condition
$A_m(a)=0$. In fact, if $F(a)=A_m(a)=0$ then by \eqref{eq.defA}
$A_{m+1}(a)=(-aA_{m-1}(a)^q)^q$  and by \eqref{property_FG_3}
$G(a)=0$ i.e. $A_{m+1}(a)=-aA_{m-1}(a)^q$, so
$A_{m+1}(a)=A_{m+1}(a)^q$ that is $A_{m+1}(a)\in \GF{q}\cap
\GF{Q}=\GF{p^d}$.

As the second note, when $p=2$, the criteria for $N_a=0,1,2$ in
\cite{MS2019} are false. In the criteria for $N_a=0,1,2$ of Theorem
8 of \cite{MS2019}, $G_n\in \GF{q}$ or $G_n\notin \GF{q}$ must be
fixed by $G_n+{G_n}^\sigma+{G_{n-1}}^\sigma=0$ or
$G_n+{G_n}^\sigma+{G_{n-1}}^\sigma\neq 0$ respectively. Note that
the quantity $G_n+{G_n}^\sigma+{G_{n-1}}^\sigma$ ($\Delta_L$ for $p$
odd, resp.) therein equals $G(a)^{\frac{1}{q}}$ ($E^{\frac{1}{q}}$
for $p$ odd, resp.) in the notation of our paper.
\end{remark}

\section{More for the case $N_a=p^d+1$}
Let $S_a=\{x\in \GF{p^{md}}=\GF{Q} \mid P_a(x)=0\}$. The following problem is remained: when $N_a=p^d+1$ i.e. $A_m(a)=0$, express $S_a$ explicitly in terms of $a$.\\
For this problem, the following facts are the only things we know at
the moment.
\begin{enumerate}
\item When $m=3$ and $A_3(a)=1-a^q=0$ i.e. $a=1$, we have
\[S_a=\{(b-b^q)^{q-1}, b\in
\GF{p^{3d}}\setminus \GF{p^d}\}.\]
\item When $p=2$, $m=4$ and $A_4(a)=1+a^q+a^{q^2}=0$, we have \[\sqrt{a}\in S_{a}.\]
\item When $p=2$, $m=5$ and $A_5(a)=1+a^q+a^{q^2}+a^{q^3}(1+a^q)=0$, we have \[\frac{a(a+a^q)}{1+a^q+a^{q+1}}\in S_{a}.\]
\item When $p=2$, $m=6$ and $A_6(a)=1+a^q+a^{q^2}+a^{q^3}(1+a^q)+a^{q^4}(1+a^q+a^{q^2})=0$, we have \[\sqrt{\frac{a^2(1+a+a^q+a^{q^2+1})+a^{q^2+q+1}(1+a+a^q)^q}{a^{2q^2+q}+(1+a+a^q)(1+a^2+a^q)^q}}\in S_{a}.\]
\end{enumerate}
All these can be checked by direct substitutions to $P_a(X)$.

\begin{lemma}
If $x^{q+1}+x+a=0$ for $a\in \GF{Q}^*$, then for any $r\geq 0$
\begin{equation}\label{eq:x^{q^r}}
x^{q^r}=\frac{A_{r+1}(a)x-aA_{r}(a)^q}{A_r(a)x-aA_{r-1}(a)^q},
\end{equation}
where the denominator never equal zero.
\end{lemma}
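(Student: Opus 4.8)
The plan is to read off \eqref{eq:x^{q^r}} as a direct rearrangement of the identity \eqref{eq.xqr} already established inside the proof of Lemma~\ref{lem.quadEq}. Writing $D_r := A_r(a)x - aA_{r-1}(a)^q$ for the denominator and $N_r := A_{r+1}(a)x - aA_r(a)^q$ for the numerator, the identity \eqref{eq.xqr} says precisely that $x^{q^r}D_r = N_r$. Hence, as soon as $D_r \neq 0$, we may divide to obtain $x^{q^r} = N_r/D_r$, which is exactly \eqref{eq:x^{q^r}}. The entire content of the lemma therefore reduces to proving that $D_r$ never vanishes.

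First I would dispose of the small indices. Since $P_a(0) = a \neq 0$, every root $x$ satisfies $x \neq 0$; because $A_0 \equiv 0$ and $A_1 = 1$, this already gives $D_1 = x \neq 0$, and the case $r=0$ is the trivial identity $x^{q^0}=x$ (the right-hand side collapsing to $x$ under the conventions $A_0\equiv 0$ and $aA_{-1}(a)^q=-1$ forced by extending \eqref{eq.defAA} to $r=-1$). This settles $r\in\{0,1\}$ and supplies the base of the argument. For the main range $r \geq 2$, the key step is an elimination argument: suppose for contradiction that $D_r = 0$ for some such $r$. Then $x^{q^r}D_r = 0$, so \eqref{eq.xqr} forces the numerator to vanish as well, $N_r = 0$. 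From the two relations $A_r(a)x = aA_{r-1}(a)^q$ and $A_{r+1}(a)x = aA_r(a)^q$ I would eliminate $x$ by cross-multiplication; using $a \neq 0$ this yields $A_r(a)^{q+1} - A_{r-1}(a)^q A_{r+1}(a) = 0$. But the norm identity \eqref{eq.Norm}, applied with running index $r-1$, asserts that this same quantity equals $a^{\frac{q(q^{r-1}-1)}{q-1}}$, which is nonzero precisely because $a \in \GF{Q}^{*}$. This contradiction shows $D_r \neq 0$ for every $r \geq 2$, and combined with the base cases completes the proof.

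The hard part is almost entirely bookkeeping rather than genuine difficulty: one must match the eliminated expression to the correctly shifted instance of \eqref{eq.Norm} and verify that the shift stays in the valid range $r-1\geq 1$, which is exactly why the boundary indices $r=0,1$ are treated separately. Everything else is immediate algebra, and the only substantive input beyond \eqref{eq.xqr} is the nonvanishing of the norm, which is guaranteed by the hypothesis $a\neq 0$.
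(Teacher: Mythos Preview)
Your proposal is correct and follows essentially the same route as the paper: both recognize \eqref{eq:x^{q^r}} as a rearrangement of \eqref{eq.xqr} and reduce everything to showing $D_r\neq 0$, which is done by observing that $D_r=0$ forces $N_r=0$ via \eqref{eq.xqr} and then deriving a contradiction with the norm identity \eqref{eq.Norm} at index $r-1$. Your version is slightly more careful in separating out the boundary cases $r\in\{0,1\}$ and in eliminating $x$ by cross-multiplication rather than division (thus avoiding an implicit assumption that $A_r(a)\neq 0$), but the argument is otherwise identical.
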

\begin{proof}
This is an alternation of \eqref{eq.xqr}. Only thing to be verified
is the fact that the denominator never equal zero. In fact, if
$A_r(a)x-aA_{r-1}(a)^q=0$ (and so also $A_{r+1}(a)x-aA_{r}(a)^q=0$
by \eqref{eq.xqr}), then
$x=\frac{aA_{r-1}(a)^q}{A_r(a)}=\frac{aA_{r}(a)^q}{A_{r+1}(a)}$ and
thus it follows $a(A_{r}(a)^{q+1}-A_{r-1}(a)^qA_{r+1}(a))=0$. But
\eqref{eq.Norm} shows
$a(A_{r}(a)^{q+1}-A_{r-1}(a)^qA_{r+1}(a))=a^{\frac{q^r-1}{q-1}}\neq
0$ i.e. a contradiction.\qed
\end{proof}

\begin{lemma}
If $A_{m}(a)=0$, then  for any $x\in \GF{Q}$ such that
$x^{q+1}+x+a=0$, it holds
\[
Nr_{k}^{km}(x)=A_{m+1}(a).
\]
Furthermore, for any $t\geq 0$
\[
A_{m+t}(a)=A_{m+1}(a)\cdot A_{t}(a).
\]
\end{lemma}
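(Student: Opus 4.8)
The plan is to obtain the norm by telescoping the closed form for $x^{q^r}$ supplied by \eqref{eq:x^{q^r}}. Writing the norm as $Nr_{k}^{km}(x)=\prod_{r=0}^{m-1}x^{q^r}$ (the product of the Galois conjugates of $x$ over $\GF{q}$, legitimate since $\GF{Q}\subseteq\GF{q^m}$), I abbreviate the denominators in \eqref{eq:x^{q^r}} by $c_r:=A_r(a)x-aA_{r-1}(a)^q$, which the preceding lemma guarantees are nonzero. Then \eqref{eq:x^{q^r}} reads $x^{q^r}=c_{r+1}/c_r$, so the product over $r=1,\dots,m-1$ collapses to $c_m/c_1$. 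Separating the factor $x^{q^0}=x$, I get $Nr_{k}^{km}(x)=x\cdot c_m/c_1$. Here $c_1=A_1(a)x-aA_0(a)^q=x$ because $A_1=1$ and $A_0=0$, while the hypothesis $A_m(a)=0$ gives $c_m=-aA_{m-1}(a)^q$; the two copies of $x$ cancel and $Nr_{k}^{km}(x)=-aA_{m-1}(a)^q$. (Note $m\geq 3$ since $A_1,A_2$ never vanish, so no empty-product degeneracy occurs.)

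It then remains to identify this with $A_{m+1}(a)$, and for that I would invoke \eqref{property_FG_3}. Since $F=A_m$, evaluating \eqref{property_FG_3} at $a$ with $A_m(a)=0$ yields $G(a)=0$. By the very definition of $G$, one has $G(a)=-A_{m+1}(a)-aA_{m-1}(a)^q$, so $G(a)=0$ is exactly the relation $A_{m+1}(a)=-aA_{m-1}(a)^q$. Combining with the previous paragraph gives $Nr_{k}^{km}(x)=-aA_{m-1}(a)^q=A_{m+1}(a)$, settling the first assertion.

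For the multiplicative relation $A_{m+t}(a)=A_{m+1}(a)A_t(a)$ I would argue by induction on $t$, with base cases $t=0$ (which is the hypothesis $A_m(a)=0=A_{m+1}(a)A_0(a)$) and $t=1$ (trivial, as $A_1=1$). For the step I apply \eqref{eq.defAA} at the indices $m+t$ and $t$, namely $A_{m+t+2}(a)=-A_{m+t+1}(a)-a^{q^{m+t}}A_{m+t}(a)$ and $A_{t+2}(a)=-A_{t+1}(a)-a^{q^t}A_t(a)$. Multiplying the second identity by $A_{m+1}(a)$ and using the induction hypothesis to replace $A_{m+1}(a)A_{t+1}(a)$ by $A_{m+t+1}(a)$ and $A_{m+1}(a)A_t(a)$ by $A_{m+t}(a)$, the desired equality $A_{m+t+2}(a)=A_{m+1}(a)A_{t+2}(a)$ reduces to matching the two coefficients $a^{q^{m+t}}$ and $a^{q^t}$. (The lone index $t=0$ required in the step is immediate from \eqref{eq.defAA} at $r=m$: $A_{m+2}(a)=-A_{m+1}(a)=A_{m+1}(a)A_2(a)$.)

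The crux, and the only nonroutine point, is precisely this coefficient matching: although the recurrence \eqref{eq.defAA} carries the nonconstant coefficient $a^{q^r}$, that coefficient becomes periodic of period $m$ once evaluated at $a\in\GF{Q}$. Indeed $\GF{Q}\subseteq\GF{q^m}$ because $q^m=Q^{k/d}$, so $a^{q^m}=a$ and hence $a^{q^{m+t}}=(a^{q^m})^{q^t}=a^{q^t}$. With this the two coefficients coincide, the induction closes, and $A_{m+t}(a)=A_{m+1}(a)A_t(a)$ holds for all $t\geq 0$.
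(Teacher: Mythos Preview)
Your proof is correct and follows essentially the same approach as the paper: telescope the identities \eqref{eq:x^{q^r}} for $r=1,\dots,m-1$ to obtain $Nr_{k}^{km}(x)=-aA_{m-1}(a)^q$, identify this quantity with $A_{m+1}(a)$, and then finish by induction on $t$ using the periodicity $a^{q^{m+t}}=a^{q^t}$ for $a\in\GF{Q}\subseteq\GF{q^m}$. The only cosmetic difference is in the identification step: the paper reads $-aA_{m-1}(a)^q=A_{m+1}(a)^{1/q}$ directly from \eqref{eq.defA} and then uses that the norm lies in $\GF{q}$, whereas you appeal to \eqref{property_FG_3} and the definition of $G$ to get $A_{m+1}(a)=-aA_{m-1}(a)^q$ at once; both are equally valid.
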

\begin{proof}
By multiplying all equalities \eqref{eq:x^{q^r}} for $r$ ranging
from 1 to $m-1$
 side by side we get
 $x^{\frac{q^m-1}{q-1}}=-aA_{m-1}(a)^q=A_{m+1}(a)^{1/q}$, i.e. $A_{m+1}(a)=Nr_{k}^{km}(x)^q=Nr_{k}^{km}(x) \in
 \GF{q}$.
 Then, an induction on $t$ leads to the conclusion of the lemma.\qed
\end{proof}

\section{Conclusions}
We studied the equation $P_a(X)=X^{p^k+1}+X+a=0, a\in \GF{p^n}$ and
proved some new criteria for the number of the $\GF{p^n}$-zeros of
$P_a(x)$. For the cases of one or two $\GF{p^n}$-zeros, we provided
explicit expressions for these rational zeros in terms of $a$. For
the case of $p^{\gcd(n, k)}+1$ rational zeros, we provided a
parametrization of such $a$'s and expressed all the $p^{\gcd(n,
k)}+1$ rational zeros by using this parametrization.   An important
remaining problem is whether for any given $p,n,k$, in the case of
$p^{\gcd(n, k)}+1$ rational zeros, it is always possible to
explicitly express these $p^{\gcd(n, k)}+1$ rational zeros in terms
of $a$.

\end{document}